\definecolor{darkred}{rgb}{0.65,0.15,0}
\pgfplotsset{compat=1.12}
\newlength\@SizeOfCirc%
\newcommand{\CricArrowRight}[1]{%
    \setlength{\@SizeOfCirc}{\maxof{\widthof{#1}}{\heightof{#1}}}%
    \tikz [x=1.0ex,y=1.0ex,line width=.15ex, draw=black]%
        \draw [->,anchor=center]%
            node (0,0) {#1}%
            (0,1.2\@SizeOfCirc) arc (-240:85:1.2\@SizeOfCirc);%
}%
\numberwithin{equation}{section}
\newcommand{\sump}{\sideset{}{'}\sum}
\newcommand{\eprint}[1]{{\href{http://arxiv.org/abs/#1}{\texttt{[#1}]}}}
\newcommand{\eprintN}[1]{{\href{http://arxiv.org/abs/#1}{\texttt{#1 [hep-th]}}}}
\newcommand{\cW}{{\mathcal{W}}}
\newcommand{\cR}{{\mathcal{R}}}
\newcommand{\mf}[1]{{\mathfrak{#1}}}
\def\DJo{$\;$\kern-.4em \hbox{D\kern-.8em\raise.15ex\hbox{--}\kern.35em okovi\'c}}
\newtheorem{lemma}{Lemma}
\newcommand{\be}{\begin{equation}}
\newcommand{\ee}{\end{equation}}
\def \t {\tau}
\def \cI {\mathcal{I}}
\def \cF {\mathcal{F}}
\def \bt {\bar{\tau}}
\def \G {\Gamma}
\def \Z {\mathbb{Z}}
\def \cD {\mathfrak{D}}
\def \cV {\mathcal{V}}
\def \Laplace {\Delta}
\def \w {\omega}
\newcommand{\bp}{\begin{tikzpicture}}
\newcommand{\ep}{\end{tikzpicture}}
\newcommand{\greens}[3]{
\draw [black, thick, fill = white] (#1 - 0.2,#2 -0.2) rectangle (#1 + 0.2,#2 + 0.2);
\node [black] at (#1,#2) {$#3$};}
\newcommand{\indices}[4]{
\node [black] at (#1,#2) {\scalebox{#4}{\mbox{\Huge $#3$}}};}
\newcommand{\three}[3]{
\resizebox{.05\textwidth}{!}{
\begin{tikzpicture}
\draw [line width=1.5mm] (-2,-1.7) --  (-2.4,-1.7) -- (-2.4,1.8) -- (-2,1.8) ;
\draw [line width=1.5mm] (2,-1.7) --  (2.4,-1.7) -- (2.4,1.8) -- (2,1.8) ;
\draw [line width=1.5mm, gray, opacity = 0.3] (-2,0) .. controls (0,2) .. (2,0);
\draw [line width=1.5mm, gray, opacity = 0.3] (-2,0) .. controls (0,0) .. (2,0);
\draw [line width=1.5mm, gray, opacity = 0.3] (-2,0) .. controls (0,-2) .. (2,0);
\indices{0}{1.3}{#1}{1.5}
\indices{0}{0}{#2}{1.5}
\indices{0}{-1.3}{#3}{1.5}
\end{tikzpicture}}}
\newcommand{\greensB}[3]{
\node [black] at (#1,#2) {\scalebox{1.5}{\mbox{\Huge $#3$}}};}
\newcommand{\five}[5]{
\resizebox{.05\textwidth}{!}{
\begin{tikzpicture}
\draw[line width=1.5mm, gray, opacity = 0.3]  (-2,-2) -- (2,-2);
\draw [line width=1.5mm] (-2.8,-2.2) --  (-3.2,-2.2) -- (-3.2,1.8) -- (-2.8,1.8) ;
\draw [line width=1.5mm] (2.8,-2.2) --  (3.2,-2.2) -- (3.2,1.8) -- (2.8,1.8) ;
\draw [line width=1.5mm, gray, opacity = 0.3] (-2,-2) .. controls (-2.8,0) .. (0,2);
\draw [line width=1.5mm, gray, opacity = 0.3] (-2,-2) .. controls (-1.2,0) .. (0,2);
\draw [line width=1.5mm, gray, opacity = 0.3](2,-2) .. controls (2.8,0) .. (0,2);
\draw [line width=1.5mm, gray, opacity = 0.3](2,-2) .. controls (1.2,0) .. (0,2);
\indices{-2.4}{0}{#1}{2}
\indices{-1}{0}{#2}{2}
\indices{1}{0}{#3}{2}
\indices{2.4}{0}{#4}{2}
\indices{0}{-2}{#5}{2}
\end{tikzpicture}}}
\newcommand{\four}[4]{
\resizebox{.05\textwidth}{!}{
\begin{tikzpicture}
\draw [line width=1.5mm] (-2.8,-2.2) --  (-3.2,-2.2) -- (-3.2,1.8) -- (-2.8,1.8) ;
\draw [line width=1.5mm] (2.8,-2.2) --  (3.2,-2.2) -- (3.2,1.8) -- (2.8,1.8) ;
\draw [line width=1.5mm, gray, opacity = 0.3] (0,-2) .. controls (-2.8,0) .. (0,2);
\draw [line width=1.5mm, gray, opacity = 0.3] (0,-2) .. controls (-1.2,0) .. (0,2);
\draw [line width=1.5mm, gray, opacity = 0.3](0,-2) .. controls (2.8,0) .. (0,2);
\draw [line width=1.5mm, gray, opacity = 0.3](0,-2) .. controls (1.2,0) .. (0,2);
\indices{-2.4}{0}{#1}{2}
\indices{-0.9}{0}{#2}{2}
\indices{0.9}{0}{#3}{2}
\indices{2.4}{0}{#4}{2}
\end{tikzpicture}}}
\newcommand{\six}[6]{
\resizebox{.05\textwidth}{!}{
\begin{tikzpicture}
\draw [line width=1.5 mm] (-2,-1.4) --  (-2.4,-1.4) -- (-2.4,2.4) -- (-2,2.4) ;
\draw [line width=1.5 mm] (2,-1.4) --  (2.4,-1.4) -- (2.4,2.4) -- (2,2.4) ;
\draw [line width=1.5mm, gray, opacity = 0.3] (2,-1) -- (-2,-1) -- (0,2) -- (2,-1) -- (0,0) -- (0,2) ;
\draw [line width=1.5mm, gray, opacity = 0.3] (-2,-1) -- (0,0);

\greensB{-1}{0.5}{#1}
\greensB{0}{1}{#2}
\greensB{1}{0.5}{#3}
\greensB{1}{-0.5}{#4}
\greensB{-1}{-0.5}{#5}
\greensB{0}{-1}{#6}
\end{tikzpicture}}}
\newcommand{\point}[2]{
\filldraw[black] (#1,#2) circle (2pt);
}
\newcommand{\lineB}[3]{
\node [black] at (#1,#2) {\scalebox{1.6}{\mbox{\Huge $#3$}}};}
\newcommand{\vertB}[3]{
\node [black] at (#1,#2) {\scalebox{1.2}{\mbox{\Huge $#3$}}};}
\newcommand{\merc}[9]{
  \def\temps{#1}%
  \def\tempt{#2}%
  \def\tempp{#3}%
  \def\tempq{#4}%
  \def\tempr{#5}%
  \def\tempw{#6}%
  \def\tempvIa{#7}%
  \def\tempvIb{#8}%
  \def\tempvIc{#9}%
  \mercexp
}
\newcommand{\mercexp}[9]{
  \def\tempvIIa{#1}%
  \def\tempvIIb{#2}%
  \def\tempvIIc{#3}%
  \def\tempvIIIa{#4}%
  \def\tempvIIIb{#5}%
  \def\tempvIIIc{#6}%
  \def\tempvIVa{#7}%
  \def\tempvIVb{#8}%
  \def\tempvIVc{#9}%
\raisebox{-0.5\height}{\resizebox{.15\textwidth}{!}{
\begin{tikzpicture}
\def\boxM{4.7}
\draw [white] (-\boxM,\boxM) -- (\boxM,\boxM) -- (\boxM,-\boxM) -- (-\boxM,-\boxM) -- (-\boxM,\boxM) ;
\draw  [line width=1mm, gray, opacity = 0.3] (0,0) circle (4);

\draw [line width=1mm, gray, opacity = 0.3] (0,0) -- (0,4);
\draw [line width=1mm, gray, opacity = 0.3] (0,0) -- (3.461,-2);
\draw [line width=1mm, gray, opacity = 0.3] (0,0) -- (-3.461,-2);

\lineB{-3.5}{1.8}{\temps}
\lineB{0}{2}{\tempt}
\lineB{3.4}{1.8}{\tempp}
\lineB{1.9}{-1.2}{\tempq}
\lineB{-2}{-1.3}{\tempw}
\lineB{0}{-4}{\tempr}

\vertB{-.7}{4}{\tempvIa}
\vertB{.7}{4}{\tempvIb}
\vertB{0}{3.3}{\tempvIc}

\vertB{3.7}{-1.2}{\tempvIIa}
\vertB{3.1}{-2.7}{\tempvIIb}
\vertB{3}{-1.7}{\tempvIIc}

\vertB{-3.8}{-1.2}{\tempvIIIa}
\vertB{-3}{-1.7}{\tempvIIIb}
\vertB{-3.1}{-2.7}{\tempvIIIc}

\vertB{0}{0.6}{\tempvIVa}
\vertB{0.6}{-0.3}{\tempvIVb}
\vertB{-0.6}{-0.3}{\tempvIVc}
\end{tikzpicture}}}}
\newcommand{\nn}{\nonumber}
\newcommand{\ints}{\mathbb{Z}}
\newcommand{\reals}{\mathbb{R}}
\newcommand{\pa}{\partial}
\newcommand{\pab}{\overline{\partial}}
\begin{document}

\hypersetup{pageanchor=false}
\thispagestyle{empty}

\mbox{ }
\vspace{20mm}

\begin{center}
{\LARGE \bf Tetrahedral modular graph functions}\\[10mm]

\vspace{8mm}
\normalsize
{\large Axel Kleinschmidt${}^{1,2}$ and Valentin Verschinin${}^1$}

\vspace{10mm}
${}^1${\it Max-Planck-Institut f\"{u}r Gravitationsphysik (Albert-Einstein-Institut)\\
Am M\"{u}hlenberg 1, DE-14476 Potsdam, Germany}
\vskip 1 em
${}^2${\it International Solvay Institutes\\
ULB-Campus Plaine CP231, BE-1050 Brussels, Belgium}

\vspace{20mm}

\hrule

\vspace{10mm}

\begin{tabular}{p{12cm}}
{\small
The low-energy expansion of one-loop amplitudes in type II string theory generates a series of world-sheet integrals whose integrands can be represented by world-sheet Feynman diagrams. These integrands are modular invariant and understanding the structure of the action of the modular Laplacian on them is important for determining their contribution to string scattering amplitudes. In this paper we study a particular infinite family of such integrands associated with three-loop scalar vacuum diagrams of tetrahedral topology and find closed forms for the action of the Laplacian. We analyse the possible eigenvalues and degeneracies of the Laplace operator by group- and representation-theoretic means. 
}
\end{tabular}
\vspace{7mm}
\hrule
\end{center}

\newpage
\hypersetup{pageanchor=true}
\setcounter{page}{1}

\setcounter{tocdepth}{2}
\tableofcontents

\vspace{5mm}
\hrule
\vspace{5mm}

\section{Introduction and summary}

Scattering amplitudes are central for understanding the structure of superstring theory. Their low-energy ($\alpha'$) expansion exhibits many deep mathematical structures. For instance, the four-graviton scattering amplitude of type II string theory in $10-d$ space-time dimensions is expected to be invariant under the string U-duality group $E_{d+1}(\mathbb{Z})$~\cite{Hull:1994ys} order by order in $\alpha'$~\cite{Sen:1994fa,Green:1997tv}. This can be used together with supersymmetry to determine the lowest order derivative corrections of the form $D^{2k}R^4$ arising from the four-graviton scattering amplitude as exact (generalized) automorphic forms of the moduli~\cite{Green:1997tv,Green:1997as,Kiritsis:1997em,Pioline:1998mn,Green:1998by,Obers:1999um,Green:1999pu,Green:2005ba,Basu:2007ru,Basu:2007ck,Green:2010wi,Pioline:2010kb,Green:2010kv,Basu:2011he,Green:2011vz,Fleig:2012xa,Green:2014yxa,Bossard:2014lra,Basu:2014hsa,Bossard:2014aea,Pioline:2015yea,Bossard:2015uga,Fleig:2015vky}. As these automorphic forms are invariant under U-duality, they contain information about all orders of string perturbation theory and also non-perturbative effects. However, most results here are restricted to the four-graviton amplitude in type II in various dimensions and BPS-protected couplings associated with small automorphic representations. The automorphic forms have led to predictions of higher-genus string amplitudes and certain non-renormalisation theorems that have been confirmed by direct calculations~\cite{Berkovits:2004px,Berkovits:2009aw,Gomez:2013sla,DHoker:2013eea,DHoker:2014gfa}.

In a different direction, superstring amplitudes have been evaluated for many particles at low orders in string perturbation theory and the systematics of their $\alpha'$-expansion has been studied. At string tree level and for the scattering of $N$ open or closed strings, there are amazing systematics associated with the theory of (single-valued) multiple polylogarithms and (single-valued) multiple zeta values~\cite{Brown:2009qja,Stieberger:2006bh,Stieberger:2006te,Stieberger:2009rr,Mafra:2011nv,Mafra:2011nw,Schlotterer:2012ny,Broedel:2013aza,Broedel:2013tta,Brown:2014,Stieberger:2013wea,Stieberger:2014hba,Mafra:2016mcc}. At string one-loop order, the corresponding generalisation appears to be that of (single-valued) elliptic polylogarithms and (single-valued) elliptic multiple zeta values that is currently under construction~\cite{Mafra:2012kh,Broedel:2014vla,Broedel:2015hia,Zerbini:2015rss,DHoker:2015qmf}. `Single-valued' here indicates a certain projection on the set of multiple zeta values that has to be applied in the closed superstring case~\cite{Schnetz:2013hqa,Brown:2014}. Other references on the relations of loop integrals to multiple zeta values include~\cite{Moch:2001zr,Bloch:2013tra,Adams:2015ydq} and for other work on the modular structure of string one-loop amplitudes see for example~\cite{Angelantonj:2012gw}.

In the present paper, we are interested in functions that arise in (or are related to) the $\alpha'$-expansion of closed superstring one-loop amplitudes. A one-loop amplitude is given by an integral over the modulus $\tau$ of the world-sheet torus where the integrand is a modular $SL(2,\mathbb{Z})$-invariant function that is determined by world-sheet conformal field theory. The integrand depends on $\alpha'$ and therefore the $\alpha'$-expansion of the one-loop amplitude can be studied from an $\alpha'$-expansion of the integrand. The separation into analytic and non-analytic terms in $\alpha'$ can be effectively implemented by studying the behaviour of the integrand near the boundary of the torus moduli space (cutting off the $\tau$ integration on the $SL(2,\mathbb{Z})$ fundamental domain). 

This separation and the structure of this expansion was studied in~\cite{Green:1999pv,Green:2008uj} where a formalism was developed that represented the integrand at a given $\alpha'$-order by a Feynman diagram of the world-sheet conformal field theory. This has led to the study of the structure and systematics of such world-sheet Feynman diagrams and the associated integrands in their own right~\cite{DHoker:2015foa,DHoker:2015zfa,Basu:2015ayg,Basu:2016fpd,DHoker:2016mwo,Basu:2016xrt,Basu:2016kli,DHoker:2016quv}. Understanding the structure of the integrand is necessary for finding the integrated value that is the actual contribution to the scattering amplitude. We note that for string amplitudes with more than four external states, the integrands are not necessarily described in terms of scalar propagators only but there can also be derivatives of propagators appearing in the world-sheet Feynman diagrams~\cite{Richards:2008jg,Green:2013bza}. In~\cite{Basu:2016mmk}, it was shown that in the five-graviton case one can remove these derivatives up to order $D^{10}R^4$. However, it is not known whether this is possible to arbitrary derivative order or for more general amplitudes. Restricting to standard scalar Feynman diagrams will therefore perhaps not capture all possible contributions to string scattering. Nevertheless, the scalar Feynman diagrams exhibit already a rich mathematical structure that is worthwhile to investigate.

The integrand functions determined by the scalar world-sheet Feynman diagrams are now called \textit{modular graph functions}~\cite{DHoker:2015qmf} and several cases have been studied in great detail. For world-sheet Feynman diagrams with one and two loops, the complete structure of the connected Feynman diagrams in terms of their behaviour under the modular Laplacian has been worked out~\cite{DHoker:2015foa} and this has led to many interesting and unexpected identities among these modular graph functions~\cite{DHoker:2015zfa,DHoker:2016mwo,Basu:2016kli} that partially mirror identities of multiple polylogarithms~\cite{DHoker:2015qmf}. Beyond this complete treatment of one and two loops, some special cases of higher loop integrand functions have been analysed and some of them have been integrated~\cite{DHoker:2015foa,DHoker:2015zfa,Basu:2015ayg,Basu:2016fpd,Basu:2016xrt}. One of the main tools in the study of these functions are the modular invariant differential equations that they satisfy. These are typically inhomogeneous Laplace equations that sometimes admit an explicit integration with boundary conditions from degeneration limits of the toroidal world-sheet. 

In this paper, we will study an infinite family of modular graph functions at three-loop order on the world-sheet. We restrict to  tetrahedral Feynman diagrams but allow for an arbitrary number of vertices along the edges of the tetrahedron. In graphical notation, the functions we are interested in are associated with Feynman diagrams of the form
\begin{align}
\bp
\point{2}{-1}
\point{-2}{-1}
\point{0}{2}
\point{0}{0}
\draw[thick]  (2,-1) -- (-2,-1) -- (0,2) -- (2,-1) -- (0,0) -- (0,2) ;
\draw[thick]   (-2,-1) -- (0,0);
\node [above] at (0,2) {$\scriptstyle z_{1}$};
\node [above right] at (0,0) {$\scriptstyle z_{2}$};
\node [below] at (-2,-1) {$\scriptstyle z_{3}$};
\node [below] at (2,-1) {$\scriptstyle z_{4}$};
\greens{-1}{0.5}{s}
\greens{0}{1}{t}
\greens{1}{0.5}{p}
\greens{1}{-0.5}{q}
\greens{-1}{-0.5}{w}
\greens{0}{-1}{r}
\ep\nn
\end{align}
where the labels on the edges indicate the number of consecutive scalar propagators along the edge, meaning that the corresponding propagator is raised to the power given by the label. The simplest instance of such a modular graph function, corresponding to the case $s=t=p=q=w=r=1$, was studied in~\cite{Basu:2015ayg} and its contribution to the $D^{12}R^4$ derivative correction was determined by using the inhomogeneous Laplace equation satisfied by the integrand associated with this diagram. In general, we will refer to the modular graph functions associated with the above diagram as tetrahedral modular graph functions. We will call $s+t+p+q+w+r$ the \textit{weight} of the modular graph function.

We shall show in this paper that the family of tetrahedral modular graph functions satisfies an inhomogeneous Laplace equation where the right-hand side contains `simpler' modular graph functions when the spectrum is diagonalised. This is in complete parallel with~\cite{DHoker:2015foa} where at two loops the right-hand sides were quadratic polynomials in non-holomorphic Eisenstein series. Our results contain those of~\cite{Basu:2015ayg} mentioned above as a special case and we employ heavily graphical methods similar to those of~\cite{Basu:2016kli}. The tetrahedral graph is symmetric under the action of the finite permutation group $\mf{S}_4$ and we will show that the modular Laplace operator is closely related to the quadratic Casimir operator of $\mf{sl}(3)$. These two ingredients allow us to use finite group theory and representation theory to deduce certain properties of the spectrum of the Laplace operator acting on the family of tetrahedral modular graph functions. As a by-product we will obtain a simple rederivation of the two-loop results of~\cite{DHoker:2015foa} by the same methods. We note that the modular Laplacian on tetrahedral modular graph functions closes without the need to introduce modular graph \textit{forms} that were recently introduced as a generalisation in~\cite{DHoker:2016mwo,DHoker:2016quv}. Contrary to the cases studied in~\cite{DHoker:2016quv}, the eigenvalues of the modular Laplacian that we obtain are surprisingly not only of the form $s(s-1)$ for non-negative integers $s$. 

There are many possible generalisations and extensions of our work that are beyond the scope of the present paper. A point we have not investigated systematically is to use the inhomogeneous Laplace equations that we find to determine a basis of independent modular graph functions. This point would be very interesting in particular in connection with (elliptic) multiple polylogarithms. It would also be relevant for performing the actual world-sheet integrals over the modular graph functions that we do not attempt here. We note that useful techniques for determining the behaviour of the modular graph functions in the degeneration limit of the world-sheet ($\tau_2\to\infty$) can be found in~\cite{Green:2008uj}. Finally, it would be interesting to consider extensions of the tetrahedral modular graph functions to also include derivatives in such a way that one reconstructs integrands of closed superstring one-loop amplitudes with five and more external legs. A widely open field is also the extension to higher genus string amplitudes, see~\cite{Berkovits:2005df,Berkovits:2005ng,DHoker:2005jc,DHoker:2005ht,Gomez:2010ad,DHoker:2013eea,DHoker:2014gfa,Gomez:2015uha,Pioline:2015qha,Pioline:2015nfa} for some relevant work on genus-two Riemann surfaces, in particular in connection with the so-called Kawazumi--Zhang invariant.

The structure of this article is as follows. We first review in section~\ref{sec:G1} general facts about genus-one amplitudes in closed superstring theory in order to motivate the types of Feynman diagrams and Laplace equations that we analyse. This includes an exposition of the diagrammatical tools for manipulating modular graphs. In section~\ref{sec:TMF}, we introduce the tetrahedral modular graph functions that are the central objects in this paper. We present their Laplace equation in general and introduce a generating function that makes it possible to connect to the representation theory of $\mf{sl}(3)$. We present detailed examples of Laplace equations and spectral properties up to weight $12$ together with some general considerations. These are the main results of this paper. Appendices contain results on simpler two-loop modular graph functions and technical details of some of the calculations of section~\ref{sec:TMF}.

\section{Genus-one amplitudes and modular graph functions}
\label{sec:G1}

We shall consider genus-one contributions to $n$-graviton scattering amplitudes in type II string theory compactified on a torus $T^{d}$ from ten to $D=10-d$ space-time dimensions. The moduli dependence of these contributions appears generically through integrals of the type
\be
\label{eq:genI}
	\cI_{F} = \int_{\cF} d\mu \, F(\bt, \t) \, \G_{(d,d)} \, .
\ee
The integration domain $\cF$ is a fundamental domain of the moduli space, of genus-one Riemann surface
\begin{equation}
\label{eq:FD}
	\cF = \left\{ |\t_1|\leq \frac12, | \t | \geq 1 \right\} = \mathbb{H} / PSL_2(\mathbb{Z})\, ,
\end{equation}
where $\mathbb{H}=\left\{ \tau =\tau_1+i\tau_2 \in \mathbb{C} \, \middle|\, \tau_2>0\right\}$ is the complex upper half plane on which the modular group $PSL(2,\mathbb{Z})$ acts by the standard fractional linear transformation.
The integration measure $d \mu$ is the standard $PSL(2,\Z)$ invariant measure
\begin{equation}
	d \mu = \frac{d \tau_1 d \tau_2}{ \tau_2^2} \, ,
\end{equation}
such that the volume of the fundamental domain is normalised to be $\int_{\cF} d \mu = \frac{\pi}{3}$. The function $\G_{(d,d)}$ is the Narain genus-one partition function~\cite{Narain:1985jj} for the self-dual lattice that describes toroidal compactifications from ten dimensions to $10-d$ dimensions:
\begin{equation}
\label{eq:Narain}
	\G_{(d,d)} = \t_{2}^{d/2} \sum_{n^{I} \in \mathbb{Z}^{2d}} e^{- \pi \tau_2 | Z(n) |^2} e^{\pi i \tau_1 n^I \eta_{IJ} n^J}\, ,
\end{equation}
in terms of the $SO(d,d)$ invariant metric $\eta_{IJ}=\left(\begin{smallmatrix} 0&1\\1&0\end{smallmatrix}\right)$. The mass squared $| Z(n) |^2$ appearing in~\eqref{eq:Narain} is given by
\begin{equation}
	|Z(n)|^{2} = Z_{a}(n) Z^{a}(n) = n^{I} \cV_{a I} \cV^{a}{}_{J} n^{J} \, .
\end{equation}
$\cV^{a}{}_{I}$ is the coset representative parametrising the symmetric space $SO(d,d)/SO(d) \times SO(d)$, which transforms from the left under the local compact subgroup and from the right under the global $SO(d,d)$. Local coordinates can be chosen in terms of the metric and $B$-field on the torus in the standard fashion~\cite{Narain:1985jj,Maharana:1992my}. The Narain partition function $\G_{(d,d)}$ is invariant under $SO(d,d,\Z)$ transformations and modular transformations from $PSL(2,\Z)$. The invariance under $SO(d,d,\Z)$ is obvious, $PSL(2,\Z)$ invariance can only be seen after using Poisson resummation. The integral~\eqref{eq:genI} is by construction still a function of the moduli $\cV^a{}_I$ of the space-time Narain torus; the dependence on the world-sheet torus parameter $\tau$ is being integrated over.

The function $F(\t,\bt)$ appearing in~\eqref{eq:genI} encodes the specifics of the scattering process under consideration. It is required to be invariant under $PSL(2, \Z)$ transformations acting by
\begin{equation}
	F(\t, \bt) = F \biggl(\frac{ a\t + b}{ c\t + d}, \frac{ a\bt + b}{c\bt + d} \biggr) \, ,  \qquad a d - b c = 1 \, ,
\end{equation}
with $a,b,c,d\in\mathbb{Z}$. For general processes $F(\tau,\bar{\tau})$ term will be a complicated function encoding momentum and Narain moduli dependence. However, for a four-graviton interaction, its form can be found explicitly as a Koba--Nielsen prefactor~\cite{Green:1981yb,DHoker:2015foa}
\be
\label{eq:4pt}
F(\tau,\bar{\tau}) = \prod_{i = 1}^{4} \int_{\Sigma} \frac{d^2 z_i}{\tau_2} e^{\mathcal{D}} \, ,
\ee
where $\mathcal{D}$ is a sum over all insertions at local coordinates $z_i$ and $z_j$
 \be
 \label{eq:defD}
\mathcal{D} = \sum_{i < j} s_{i j} G(z_i - z_j | \tau) \, 
 \ee
with dimensionless Mandelstam variables
\be
\label{eq:MS}
s_{i j} = - \frac{1}{4} \alpha^{\prime} (k_i +  k_j)^2 
\ee
and $G(z_i-z_j|\tau)$ the translation invariant scalar propagator between $z_i$ and $z_j$ on the world-sheet torus of modulus $\tau$. We will give an explicit form for the propagator below in~\eqref{eq:Pmom}. The integral in~\eqref{eq:4pt} is over the world-sheet of the torus $\Sigma$ that we parametrise in a fixed domain of volume $\int_\Sigma d^2z = \tau_2$, where $d^2z = dz_1 dz_2$ in terms of the real and imaginary parts of $z=z_1+iz_2$. While the formula~\eqref{eq:4pt} is correct for four-graviton scattering, one will have additional insertions beyond Koba--Nielsen factors for higher point amplitudes~\cite{Green:2013bza,Mafra:2016nwr} and there can be additional Narain moduli dependences. 

The integral $\mathcal{I}_F$ in~\eqref{eq:genI} is an object of central interest in string theory. However, no closed formula for it is known. From a low-energy perspective, one can consider the $\alpha'$-expansion of the integral, corresponding to an expansion in $s_{ij}\ll 1$. This generates analytical (in $\alpha'$) terms in the scattering amplitude~\cite{Green:1999pv,Green:2008uj}. Sometimes one can then understand the integrand, and possibly even the integral, at a fixed order in $\alpha'$ by studying the differential equations the integrand satisfies. The integral $\mathcal{I}_F$ is a function on Narain moduli space $SO(d,d)/SO(d)\times SO(d)$ and for the case when $F(\t,\bt)$ is independent of the Narain moduli one can compute the action of the $SO(d,d)$ Laplacian by using~\cite{Obers:1999um}
\begin{align}
\left[\Laplace_{SO(d,d)} -2 \Laplace_{SL(2)} +\frac14d(d-2)\right] \Gamma_{(d,d)} = 0
\end{align}
that relates the $SO(d,d)$ action to one of the modular invariant $SL(2)$ Laplace operator acting on $\tau$. By partial integration the action of the $SO(d,d)$ Laplacian can then be mapped to the $SL(2)$ Laplacian acting on $F(\t,\bt)$. This action of $\Laplace_{SL(2)}$, more specifically in an $\alpha'$ expansion, is what we shall study in this paper. For genus-one world sheets with metrics parametrised by $\tau\in\mathcal{F}$ of~\eqref{eq:FD} the $SL(2)$ invariant Laplacian on the upper half plane is
\begin{equation}
	\Laplace \equiv \Laplace_{SL(2)} =  4 \tau_2^2 \partial_{\t} \partial_{\bt} = \tau_2^2 \left(\partial_{\tau_1}^2 + \partial_{\tau_2}^2\right)
\end{equation}
and we shall henceforth drop the subscript $SL(2)$ on the Laplacian as it is the only one we will use. 

\subsection{Low-order contributions to the four-graviton amplitude}

{}From the definition of $\mathcal{D}$ in~\eqref{eq:defD} and the Mandelstam invariants we see that we can perform a low-energy expansion in the $s_{ij}$ corresponding to the $\alpha'$ expansion of string theory. The result of expanding the exponential in~\eqref{eq:4pt} can be represented by a world-sheet Feynman diagram consisting of four points that are connected by $n$ lines, where $n$ is the order of the expansion in Mandelstam variables. 

The function $F(\t,\bt)$ controlling the four-graviton amplitude expands for small momenta $s_{i j} \ll 1$ (corresponding to a power expansion in $\alpha'$) as
 \be
 \label{eq:Fexp}
 F(\tau,\bar{\tau}) = \sum_{n = 0}^{\infty} \frac{1}{n!} \prod_{i = 1}^{4} \int_{\Sigma} \frac{d^2 z_i}{\tau_2} \biggl( \sum_{j < k} s_{j k} G(z_j - z_k| \tau) \biggr)^n \, ,
 \ee
where $n$ counts the number of world-sheet propagators between the $4$ points $z_i$. By virtue of the definition~\eqref{eq:MS} of the $s_{ij}$ this corresponds also to the power of $\alpha'$. Because of momentum conservation the $s_{ij}$ are not all independent in the massless four-point amplitude. Letting 
\begin{align}
s= s_{12}=s_{34} \,,\quad t = s_{13}=s_{24} \,,\quad t=s_{23}=s_{14}
\end{align}
one has
\be
s + t + u = 0 \, .
\ee
As a consequence, the analytic part of the four-graviton amplitude can be expanded in a double series in $s^2+t^2+u^2$ and $s^3+t^3+u^3$, except for the classical Einstein--Hilbert contribution~\cite{Green:1999pv}. 

The low order terms in the expansion of $F$ can be found for example in~\cite{Green:2008uj} along with their integrals contributing to the four-graviton amplitude in the low-energy expansion. We recall that the one-loop string theory calculation in ten space-time dimensions gives contributions to $R^4$ and then to every even derivative order starting from $D^6R^4$. (In lower dimensions one also has contributions for $D^4R^4$.) The integrated contributions up to $D^{10}R^4$ have been worked out~\cite{DHoker:2015foa}. A further discussion of the low order contributions can be found in~\cite{Green:1999pv,Green:2008uj}.

\subsection{Laplacian on modular graph functions and the Green's function}

For high order in $\alpha'$ the explicit functional dependence of $F$ on $\tau$ is not very well understood. As in~\cite{DHoker:2015foa}, one can consider the function~\eqref{eq:Fexp} that generates the world-sheet diagrams in the $\alpha'$-expansion as a prototype of a new class of functions called \textit{modular graph functions} that are constructed from world-sheet Feynman diagrams with an arbitrary number of points (not only four) connected by scalar propagators. These diagrams will not directly correspond to string processes but can serve as an interesting class of modular functions and certainly are relevant to the string theory calculation. 

In view of the structure of~\eqref{eq:Fexp} that is given by an integral over products Green's function connecting different vertices we will follow~\cite{DHoker:2015foa,DHoker:2015qmf} and study the following more general class of functions where the integrand is given by
\be
\label{eq:IMG}
\cI(\{n_{kl}\},\tau,\bar{\tau}) = \prod_{i = 1}^{n} \int_{\Sigma} \frac{d^{2} z_{i}}{\tau_2} \biggl( \prod_{k < l} G^{n_{kl}}(z_{k} - z_{l} | \tau) \biggr) \, 
\ee
for some non-negative integer powers $n_{kl}$ and a total of $n$ vertices. Compared to~\eqref{eq:Fexp} we have removed the Mandelstam variables and allowed for an arbitrary number $n$ of vertices. Dimensionally, such an integrand would be related to an amplitude with low-energy action of the form $D^{2w} R^4$, where $w=\sum_{k<l} n_{kl}$ is the weight of the integrand. However, this is generally only true dimensionally as it is known that for $n>4$ the integrand of the genus one amplitude is not of this simple form but also involves derivatives of Green's functions~\cite{Green:2013bza}. 

Functions arising from modular integrals over expressions of the type~\eqref{eq:IMG} are called \textit{modular graph functions}. As we will review below, they can be represented graphically in terms of Feynman world-sheet diagrams. Modular graph functions are invariant under modular transformations acting on $\tau$ and can appear as constituents of higher derivative corrections. Besides this physical relevance, they represent an interesting new class of modular functions on the upper half plane and we will be interested in evaluating the modular Laplacian acting on them. 

We shall use a diagrammatic way of computing the Laplacian acting on a modular graph function, similar to~\cite{Basu:2016kli}. The basic tool is to first rewrite the modular Laplacian using deformation theory as~\cite{DHoker:2015foa}
\begin{align}
\label{eq:Ldef}
\Laplace = 4 \tau_2^2 \partial_\t \partial_{\bt} = \delta_\mu \bar{\delta}_\mu\,,
\end{align}
where $\delta_\mu$ denotes infinitesimal changes in the complex structure while keeping the coordinates fixed. The advantage of this formalism is that one can work out the deformation of the Green's function on general grounds. As shown in~\cite{DHoker:2015foa},
the action of the deformation $\delta_\mu$ acting on a single Green's function connecting two points $z_i$ and $z_j$ can be replaced by the insertion of an additional vertex:
\be
\label{eq:deltaG}
\delta_\mu G( z_{i} - z_{j}|\tau) = - \frac{1}{\pi} \int_{\Sigma} d^{2} z \partial_{z} G(z - z_{i}|\tau) \partial_{z} G({z - z_{j}|\tau}) \, .
\ee
Moreover, the Laplacian with respect to the modular parameter satisfies
\be
\Laplace G( z_{i} - z_{j}|\tau) = 0 \, ,
\ee
since the result becomes a total derivative. Therefore, when evaluating the modular Laplacian on a product of Green's functions one has to apply the deformations $\delta_\mu$ and $\bar{\delta}_\mu$ to different factors.

 Obviously the function $\cI(\tau,\bar{\tau})$ has in general highly complicated dependence on
$\t, \bt$, however some very simple and elegant answers were found in the past. For the purpose of understanding these results we can express $G$ through 
a lattice sum:
\be
\label{eq:Pmom}
G(z| \tau) = \sum_{(m,n) \in \mathbb{Z}^2\setminus\{(0,0)\}} \frac{\tau_{2}}{\pi |m\tau  + n|^{2}} e^{\frac{\pi}{\tau_2} (\bar{z}(m \tau + n) - z(m \bar{\tau} + n))} \, ,
\ee
where the integers $m, n$ parametrise the discrete momenta on a torus $p = m + n \tau$. The zero momentum was removed from the lattice sum $\mathbb{Z}^{2}$. In this representations the modular invariance of $G(z| \tau)$ can be easily seen. 
The scalar Green's functions on a world-sheet torus $\Sigma$ of modulus $\t$ satisfies the identities
\be \label{eq:G_id}
\partial_{z} \partial_{\bar{z}} G(z | \tau) = - \pi \delta^{(2)}(z) + \frac{\pi}{\tau_2}  \, , \qquad \int_{\Sigma} d^{2} z G(z| \tau) = 0 \, ,
\ee
where the second condition is related to the choice of zero mode. The zero mode does not contribute to~\eqref{eq:Fexp} by momentum conservation $\sum_{i<j} s_{ij}=0$.

It is very helpful to represent the Green's functions and their derivatives in a graphical way in order to simplify the calculations. These are the modular graphs that represent the integrands of the modular integrals. 

A single point is a symbol for the integration of an insertion and lines represent the Green's functions between two insertions
\be
\begin{tikzpicture}
\filldraw[black] (-7,-0.5) circle (2pt) node[anchor=north] {$i$} node[anchor=east]{$\frac{1}{\tau_2} \int_{\Sigma} d^2 z_{i} = \;$};  
\draw[gray, thick] (-1,-0.5) -- (1,-0.5);
\filldraw[black] (-1,-0.5) circle (2pt) node[anchor=north] {$i$} node[anchor=east]{$\hspace{-4 mm}\, , \, \hspace{4 mm}\frac{1}{\tau_2^2} \int_{\Sigma} d^2 z_{i} d^2 z_{j} \, G(z_{i} - z_{j}|\tau) \; = \; $}; 
\filldraw[black] (1,-0.5) circle (2pt) node[anchor=north] {$j$}; 
\node at (1.3,-0.5) {$.$}; 
\end{tikzpicture} 
\ee
Due to reflection invariance in $z$ one does not need to put arrows on the propagators. 

In the same way several lines joining at one point $z$ mean that several Green's functions connect  an insertion point $z$ to various other insertions and all of them are
integrated out at the end
\be
\begin{tikzpicture}

\draw[gray, thick] (-1,-0.5) -- (1,0.5);
\draw[gray, thick] (-1,-0.5) -- (1,0);
\draw[gray, thick] (-1,-0.5) -- (1,-0.5);
\draw[gray, thick] (-1,-0.5) -- (1,-1);
\draw[gray, thick] (-1,-0.5) -- (1,-1.5);
\filldraw[black] (-1,-0.5) circle (2pt) node[anchor=north] {$z$} node[anchor=east]{$\frac{1}{\tau_2^{n + 1}} \int_{\Sigma} d^2 z \prod_{i = 1}^{n} \int_\Sigma d^2 z_{i} \, G(z - z_{i}|\tau) \; = \; $}; 

\filldraw[black] (1,0.5) circle (2pt) node[anchor=north] {$\scriptstyle n$};
\filldraw[black] (1,0) circle (2pt) node[anchor=north] {$\scriptstyle n-1$}; 
\filldraw[black] (1,-0.5) circle (2pt) node[anchor=north] {$\scriptstyle ... \qquad .$}; 
\filldraw[black] (1,-1) circle (2pt) node[anchor=north] {$\scriptstyle 2$}; 
\filldraw[black] (1,-1.5) circle (2pt) node[anchor=north] {$\scriptstyle 1$}; 
\end{tikzpicture}
\ee
We also introduce the action of the derivative in $z$ acting on one of the Green's functions
\be
\begin{tikzpicture}
\draw[gray, thick] (-1,-0.5) -- (1,-0.5);
\filldraw[black] (-1,-0.5) circle (2pt) node[anchor=north] {$i$} node[anchor=south west] {$\,\,\partial$}  node[anchor=east]{$\frac{1}{\tau_2^2} \int_{\Sigma} d^2 z_{i} d^2 z_{j} \, \partial_{z_{i}} G(z_{i} - z_{j}|\tau) \; = \; $}; 
\filldraw[black] (1,-0.5) circle (2pt) node[anchor=north] {$j$} ;
\node at (-0.2,-1.5) {$=-$};
\draw[gray, thick] (0.3,-1.5) -- (2.3,-1.5);
\filldraw[black] (0.3,-1.5) circle (2pt) node[anchor=north] {$i$};
\filldraw[black] (2.3,-1.5) circle (2pt) node[anchor=north] {$j$} node[anchor=south east] {$\,\,\partial$} node[anchor=west]{$ \; = \; - \frac{1}{\tau_2^2} \int_{\Sigma} d^2 z_{i} d^2 z_{j} \, \partial_{z_{j}} G(z_{i} - z_{j}|\tau) \, .$} ;
\end{tikzpicture}
\ee
Here, we have also illustrated the consequence of translation invariance $\partial_{z_i} G(z_i-z_j|\tau) = - \partial_{z_j} G(z_i-z_j|\tau)$.

Last but not least we are always able to rewrite the action of the deformation $\delta_\mu$ into the action of derivatives on the world-sheet by introducing an additional insertion, such that 
\be
\label{eq:deltaGDiag}
\begin{tikzpicture}
\draw[gray, thick] (-1,-0.5) -- (1,-0.5);
\filldraw[black] (-1,-0.5) circle (2pt) node[anchor=north] {$i$} node[anchor=south west] {$\qquad \delta_\mu$}  node[anchor=east]{$\delta_\mu \biggl[ \frac{1}{\tau_2^2} \int_{\Sigma} d^2 z_{i} d^2 z_{j} G(z_{i} - z_{j}|\tau) \biggr] \; = \; $}; 
\filldraw[black] (1,-0.5) circle (2pt) node[anchor=north] {$j$} node[anchor=west] {$\; = \;$}; 
\draw[gray, thick] (-2.5,-1.5) -- (-0.5,-1.5);
\filldraw[black] (-2.5,-1.5) circle (2pt) node[anchor=north] {$i$} node[anchor=east] {$\quad - \frac{\tau_2}{\pi} $};
\filldraw[black] (-1.5,-1.5) circle (2pt) node[anchor=north] {$z$} node[anchor= south east] {$\partial \,$} node[anchor= south west] {$\, \partial$};
\filldraw[black] (-0.5,-1.5) circle (2pt) node[anchor=north] {$j$} node[anchor=west]{$\; = \; - \frac{1}{\pi \tau_2^2} \int_{\Sigma} d^{2} z_{i} d^{2} z_{j} d^{2} z \partial_{z} G(z - z_{i}|\tau) \partial_{z} G({z - z_{j}|\tau}) \, . $} ;
\end{tikzpicture}
\ee
This rule is due to~\eqref{eq:deltaG}. There is a similar formula for the conjugate deformation $\bar{\delta}_\mu$ in terms of the conjugate world-sheet derivative $\bar{\partial}_z$. 

Because of equation \eqref{eq:G_id} we see that every diagram with at least one node that has only one Green's function connecting to it is vanishing. Therefore for tadpoles diagrams we obtain
\be
\label{eq:tadpole}
\begin{tikzpicture}
\draw[gray, thick, dashed] (-2,-0.5) -- (0,0);
\draw[gray, thick, dashed] (-2,0) -- (0,0);
\draw[gray, thick] (0,0) -- (2,0);
\draw[gray, thick, dashed] (-2,0.5) -- (0,0);
\filldraw[black] (0,0) circle (2pt) node[anchor=north] {$i$};
\filldraw[black] (2,0) circle (2pt) node[anchor=north] {$z$} node[anchor=west] {$\; \propto \; \frac{1}{\tau_2} \int_{\Sigma} d^{2} z G(z_{i} - z| \tau) = 0 \, .$};
\end{tikzpicture}
\ee
Additionally, we read out from \eqref{eq:G_id} the diagrammatical simplification rule
\be
\label{eq:ddcontract}
\bp
\draw[gray, thick, dashed] (-3,-0.5) -- (-2,0);
\draw[gray, thick, dashed] (-3,0) -- (-2,0);
\draw[gray, thick, dashed] (-3,0.5) -- (-2,0);
\draw[gray, thick, dashed] (0,0) -- (1,-0.5);
\draw[gray, thick, dashed] (0,0) -- (1,0);
\draw[gray, thick, dashed] (0,0) -- (1,0.5);
\draw[gray, thick] (-2,0) -- (0,0);
\filldraw[black] (-2,0) circle (2pt) node[anchor=north] {$i$} node[anchor=south west] {$\partial$} ;
\filldraw[black] (0,0) circle (2pt) node[anchor=north] {$j$} node[anchor=south east] {$\bar{\partial}$};
\filldraw[black] (1.5,0) circle (0pt) node[anchor=center] {$\; = \frac{\pi}{\tau_2} $};
\draw[black,thick] (2.1,-0.6) -- (2.0,-0.6) -- (2.0,0.6) -- (2.1,0.6);
\draw[black,thick] (4.1,-0.6) -- (4.2,-0.6) -- (4.2,0.6) -- (4.1,0.6);
\draw[gray, thick, dashed] (2.1,-0.5) -- (3.1,0);
\draw[gray, thick, dashed] (2.1,0) -- (3.1,0);
\draw[gray, thick, dashed] (2.1,0.5) -- (3.1,0);
\draw[gray, thick, dashed] (3.1,0) -- (4.1,-0.5);
\draw[gray, thick, dashed] (3.1,0) -- (4.1,0);
\draw[gray, thick, dashed] (3.1,0) -- (4.1,0.5);
\filldraw[black] (3.1,0) circle (2pt) node[anchor=north] {$i = j$};
\draw[gray, thick, dashed] (5.2,-0.5) -- (6.2,0);
\draw[gray, thick, dashed] (5.2,0) -- (6.2,0);
\draw[gray, thick, dashed] (5.2,0.5) -- (6.2,0);
\draw[gray, thick, dashed] (7.2,0) -- (8.2,-0.5);
\draw[gray, thick, dashed] (7.2,0) -- (8.2,0);
\draw[gray, thick, dashed] (7.2,0) -- (8.2,0.5);
\draw[black,thick] (5.2,-0.6) -- (5.1,-0.6) -- (5.1,0.6) -- (5.2,0.6);
\draw[black,thick] (8.2,-0.6) -- (8.3,-0.6) -- (8.3,0.6) -- (8.2,0.6);
\filldraw[black] (4.7,0) circle (0pt) node[anchor=center] {$\; - \frac{\pi}{\tau_2} $};
\filldraw[black] (6.2,0) circle (2pt) node[anchor=north] {$i$};
\filldraw[black] (7.2,0) circle (2pt) node[anchor=north] {$j$};
\node at (8.4,0) {$.$};
\ep
\ee
The derivative with respect to world-sheet variables acting on one of the Green's functions can be moved on the graph reproducing the integration by parts formula. 
We obtain for example
\be
\bp
\filldraw[black] (0,0) circle (2pt) node[anchor=north] {$i$} node[anchor=south east] {$\partial$} ;
\draw[gray, thick] (-1,0) -- (0,0);
\draw[gray, thick] (0,0) -- (1,0.5);
\draw[gray, thick] (0,0) -- (1,0);
\draw[gray, thick] (0,0) -- (1,-0.5);
\node at (1.5,0) {$= \, \, - $};
\filldraw[black] (3,0) circle (2pt) node[anchor=north] {$i$};
\draw[gray, thick] (2,0) -- (3,0);
\draw[gray, thick] (3,0) -- (4,0.5);
\draw[gray, thick] (3,0) -- (4,0);
\draw[gray, thick] (3,0) -- (4,-0.5);
\node at (3.2,0.35) {$\partial$} ;
\node at (4.5,0) {$-$} ;
\filldraw[black] (6,0) circle (2pt) node[anchor=north] {$i$};
\draw[gray, thick] (5,0) -- (6,0);
\draw[gray, thick] (6,0) -- (7,0.5);
\draw[gray, thick] (6,0) -- (7,0);
\draw[gray, thick] (6,0) -- (7,-0.5);
\node at (6.5,0.2) {$\partial$} ;

\node at (7.5,0) {$-$} ;
\filldraw[black] (9,0) circle (2pt) node[anchor=north] {$i$};
\draw[gray, thick] (8,0) -- (9,0);
\draw[gray, thick] (9,0) -- (10,0.5);
\draw[gray, thick] (9,0) -- (10,0);
\draw[gray, thick] (9,0) -- (10,-0.5);
\node at (9.25,-0.4) {$\partial$} ;
\node at (10.5,0) {$\, .$} ;
\ep
\ee

\subsection{Modular graph functions with one and two world-sheet loops}

Some subsets of modular functions with a particular geometric structures are well understood. For example a simple $s$-polygon of Green's functions 
reproduces non-holomorphic Eisenstein series with a somewhat unusual, yet for our purpose useful, normalisation~\cite{DHoker:2015foa}:
\be
\bp
\coordinate (A1) at (0,0);
\coordinate (A2) at (0.75,0.5);
\coordinate (A3) at (1.5,0);
\coordinate (A4) at (1.5,-0.75);
\coordinate (A5) at (0.75,-1.25);
\coordinate (An) at (0,-0.75);
\draw[gray, thick] (An) -- (A1) -- (A2) -- (A3) -- (A4) -- (A5);
\draw [dashed] (A5) -- (An);

\node [above left] at (0.45,0.15) {$\scriptstyle 1$};
\node [above right] at (1, 0.2) {$\scriptstyle 2$};
\node [right] at (1.45,-0.35) {$\scriptstyle 3$};
\node [right] at (1,-1.1) {$\scriptstyle 4$};
\node [left] at (0,-0.4) {$\scriptstyle s$};

\filldraw[black] (A1) circle (2pt);
\filldraw[black] (A2) circle (2pt);
\filldraw[black] (A3) circle (2pt);
\filldraw[black] (A4) circle (2pt);
\filldraw[black] (A5) circle (2pt);
\filldraw[black] (An) circle (2pt);

\node [right] at (1.7, -0.75/2) {$\displaystyle \, = \, \sum_{(m, n) \in \mathbb{Z}^2/\{0\}} \frac{\tau_2^{s}}{\pi^{2 s} |m + n \tau|^{2 s}} = E_{s}(\tau, \bar{\tau}) \, .$};
\ep
\ee
This is the simplest non-trivial structure that appears as the modular graph function, with a single summation over the discretised momentum in the loop 
and it depends just on a single value $s$, that is the number of internal vertex insertions. Eisenstein functions are know to satisfy a homogeneous Laplace equation
\be
( \Laplace - s(s - 1) ) E_{s}(\tau,\bar{\tau}) = 0 \, .
\ee
This equations can also be proved diagrammatically using the rules outlined above. For the particular value of $s = 0$, we use the normalisation
\be
E_{0}(\tau,\bar{\tau}) = 1 \, .
\ee
The next more complex structure was discussed in detail in~\cite{DHoker:2015foa} and depends on a triplet $(s,t,p)$ of integer values, that describe the number of vertex insertions on the path connecting points $z_1$ and $z_2$ on the torus 
\be
\label{eq:Cstp}
\bp
\coordinate (A1) at (0,-0.35);
\coordinate (A11) at (0.75,-0.35);
\coordinate (A12) at (1.75,-0.35);
\coordinate (A13) at (2.75,-0.35);
\coordinate (A14) at (3.75,-0.35);
\coordinate (A2) at (0.75,0.5);
\coordinate (A21) at (1.75,0.5);
\coordinate (A22) at (2.75,0.5);
\coordinate (A23) at (3.75,0.5);
\coordinate (A24) at (4.5,-0.35);
\coordinate (A3) at (4,0.5);
\coordinate (An) at (0.75,-1.25);
\coordinate (An1) at (1.75,-1.25);
\coordinate (An2) at (2.75,-1.25);
\coordinate (An3) at (3.75,-1.25);
\draw[gray, thick] (A1) -- (A11) -- (A12);
\draw[gray, thick] (A1) -- (A2) -- (A21);
\draw[gray, thick] (A22) -- (A23) -- (A24);
\draw[gray, thick] (A1) -- (An) -- (An1) ;
\draw[gray, thick] (An2) -- (An3) -- (A24) ;

\draw[gray, thick] (A13) -- (A14) -- (A24);
\draw[gray, thick, dashed] (A21) -- (A22);
\draw[gray, thick, dashed] (An1) -- (An2);
\draw [dashed] (A12) -- (A13);
\node [left] at (A1) {$\scriptstyle z_1$};

\node [above] at (0.3,0.1) {$\scriptstyle 1$};
\node [above] at (1.25,0.5) {$\scriptstyle 2$};
\node [above] at (3.25,0.5) {$\scriptstyle s-1$};
\node [above] at (4.25,0.1) {$\scriptstyle s$};

\node [above] at (0.45,-0.4) {$\scriptstyle 1'$};
\node [above] at (1.25,-0.4)  {$\scriptstyle 2'$};
\node [above] at  (3.3,-0.4) {$\scriptstyle (t-1)'$};
\node [above] at  (4.1,-0.35) {$\scriptstyle t'$};

\node [below] at (0.2,-0.7) {$\scriptstyle 1''$};
\node [below] at (1.25,-1.2) {$\scriptstyle 2''$};
\node [below] at (3.25,-1.2){$\scriptstyle (p-1)''$};
\node [below right] at  (4.1,-0.6) {$\scriptstyle p''$};

\node [right] at (A24) {$\scriptstyle z_2$};

\node [right] at (A24) {$\quad = \sum^{'}_{\substack{(m_1,n_1)\\(m_2,n_2)}} \frac{\tau_2^{s + t + p}}{\pi^{s + t + p} |m_1 + n_1 \tau|^{2 s} |m_2 + n_2 \tau|^{2 t} | m_1 + m_2 + (n_1 + n_2)\tau|^{2 p} }$};
\node at (6.3, - 1.5) {$= C_{(s,t,p)}(\tau,\bar{\tau}) \, , $};
\filldraw[black] (A1) circle (2pt);
\filldraw[black] (A2) circle (2pt);
\filldraw[black] (A21) circle (2pt);
\filldraw[black] (A22) circle (2pt);
\filldraw[black] (A23) circle (2pt);
\filldraw[black] (A24) circle (2pt);
\filldraw[black] (A11) circle (2pt);
\filldraw[black] (A12) circle (2pt);
\filldraw[black] (A13) circle (2pt);
\filldraw[black] (A14) circle (2pt);
\filldraw[black] (A5) circle (2pt);
\filldraw[black] (An) circle (2pt);
\filldraw[black] (An1) circle (2pt);
\filldraw[black] (An2) circle (2pt);
\filldraw[black] (An3) circle (2pt);
\ep
\ee
where we sum over discrete momenta $p_1$ and $p_2$ in the loop, excluding zero and have solved overall momentum conservation. The prime on the sum indicates that we have to exclude all zero momentum propagators, \textit{i.e.}, $(m_1,n_1)\neq (0,0)$, $(m_2,n_2)\neq (0,0)$ and $(m_1+m_2,n_1+n_2)\neq (0,0)$. In string theory only non-negative integer values for $s$, $t$ and $p$ arise but as argued in~\cite{DHoker:2015foa} the function $C_{(s,t,p)}$ can be analytically continued to arbitrary complex values of the parameters. We will often suppress the arguments $\tau$ and $\bar\tau$.

Starting now we will use following abbreviations to indicate the number of Green's functions that connect two points in a simple manner
\be
\bp
\coordinate (A1) at (0,0);
\coordinate (A2) at (1,0);
\coordinate (A3) at (2,0);
\coordinate (A4) at (3,0);
\coordinate (A5) at (4,0);

\coordinate (A6) at (5,0);
\coordinate (A8) at (7,0);

\filldraw[black] (A1) circle (2pt);
\filldraw[black] (A2) circle (2pt);
\filldraw[black] (A3) circle (2pt);
\filldraw[black] (A4) circle (2pt);
\filldraw[black] (A5) circle (2pt);

\node [below] at (0.5,0) {$\scriptstyle 1$};
\node [below] at (1.5,0) {$\scriptstyle 2$};
\node [below] at (3.5,0) {$\scriptstyle s $};
\node [right] at (A5) {$\, \, = $};

\draw[gray, thick] (A1) -- (A3);
\draw[gray, thick] (A4) -- (A5);
\draw[gray, thick, dashed] (A3) -- (A4);
\draw[gray, thick,] (A6) -- (A8);

\filldraw[black] (A6) circle (2pt);
\filldraw[black] (A8) circle (2pt);

\node at (7.5,0) {$.$};
\greens{6}{0}{s}
\ep
\ee
With this notation Eisenstein functions and $C_{(s,t,p)}$ can be written in a more graphical way
\be
\bp
\node [left] at (0,0) {$\scriptstyle z_1$};
\filldraw [black] (0,0) circle (2pt);
\draw [black, thick] (0.6,0) circle (0.6 cm);
\greens{1.2}{0}{s} 
\node [right] at (1.2,0) {$\,\,\,\, = E_{s}(\tau,\bar{\tau}) \, ,$};

\begin{scope}[shift={(1,0)}]
\draw [thick] (4,0) .. controls (5,1.35) .. (6,0);
\draw [thick] (4,0) .. controls (5,0) .. (6,0);
\draw [thick] (4,0) .. controls (5,-1.35) .. (6,0);
\node [left] at (4,0) {$\scriptstyle z_1$};
\node [right] at (6,0) {$\scriptstyle z_2$};
\greens{5}{1}{s}
\greens{5}{0}{t}
\greens{5}{-1}{p} 
\point{4}{0}
\point{6}{0}
\node [right] at (6.2,0) {$\,\,\,\, = C_{\three{\hspace{-2mm}s}{\hspace{-2mm}t}{\hspace{-2mm}p}}(\tau,\bar{\tau}) \, .$};
\end{scope}
\ep
\ee
It is obvious that $C_{(s,t,p)}$ is completely symmetric under permutations of the $(s,t,p)$ triplet. Furthermore for specific values of $(s,t,p)$ the function
$C_{(s,t,p)}$ simplifies to a quadratic polynomial in Eisenstein series~\cite{DHoker:2015foa}
\be
\label{eq:CabcSpec}
C_{\three{\hspace{-0mm}s}{\hspace{-0mm}t}{\hspace{-0mm}0}} = E_{s} E_{t} - E_{s + t} \, , \qquad 
C_{\three{\hspace{-0mm}s}{\hspace{-0mm}t}{\hspace{-0mm}$\bf -$1}}  = E_{s - 1} E_{t} + E_{s} E_{t - 1} \, .
\ee
This simplification can be easily seen from the lattice sum representation. 
Unfortunately, the differential equation satisfied by a general $C_{(s,t,p)}$ function is not any more homogeneous and can be derived to be~\cite{DHoker:2015foa}
\begin{align}
\label{eq:LCabc}
( \Laplace - \w) C_{\three{\hspace{-0mm}s}{\hspace{-0mm}t}{\hspace{-0mm}p}} 
&= st \left( C_{\three{\hspace{-0mm}s +1}{\hspace{-0mm}t - 1}{\hspace{-0mm}p}} 
   +  C_{\three{\hspace{-0mm}s -1}{\hspace{-0mm}t + 1}{\hspace{-0mm}p}}
   +  C_{\three{\hspace{-0mm}s +1}{\hspace{-0mm}t + 1}{\hspace{-0mm}p-2}}
   -2  C_{\three{\hspace{-0mm}s +1}{\hspace{-0mm}t}{\hspace{-0mm}p-1}}
   -2  C_{\three{\hspace{-0mm}s}{\hspace{-0mm}t + 1}{\hspace{-0mm}p-1}}\right)\nn\\
& \quad +\textrm{   the two other pairs of lines }\,,
\end{align}
where the eigenvalue $\w$ is given by
\be
\w = s(s - 1) + t(t-1) + p(p-1) \, .
\ee
The spectrum of the modular Laplacian was studied in great detail in~\cite{DHoker:2015foa}. In appendix~\ref{app:2loop}, we present a simple rederivation of the results of that paper based on an application of Molien's theorem combined with some representation theory of $\mf{sl}(2)$. 

\section{Tetrahedral family of modular graph functions}
\label{sec:TMF}

In this section, we introduce the family of modular graph functions associated with the tetrahedral graph and an arbitrary number of propagators on all edges. We determine the inhomogeneous Laplace equation satisfied by such functions and study some degeneration limits. The spectrum of the Laplace operator on tetrahedral modular graph functions is partially determined using generating function techniques.

\subsection{Inhomogeneous Laplace equation for tetrahedral modular graphs}

The next very symmetrical topology with three-valent vertices after the one above is that of a tetrahedron (or Mercedes graph):
\be
\bp
\point{2}{-1}
\point{-2}{-1}
\point{0}{2}
\point{0}{0}
\draw[thick]  (2,-1) -- (-2,-1) -- (0,2) -- (2,-1) -- (0,0) -- (0,2) ;
\draw[thick]   (-2,-1) -- (0,0);
\node [above] at (0,2) {$\scriptstyle z_{1}$};
\node [above right] at (0,0) {$\scriptstyle z_{2}$};
\node [below] at (-2,-1) {$\scriptstyle z_{3}$};
\node [below] at (2,-1) {$\scriptstyle z_{4}$};
\greens{-1}{0.5}{s}
\greens{0}{1}{t}
\greens{1}{0.5}{p}
\greens{1}{-0.5}{q}
\greens{-1}{-0.5}{w}
\greens{0}{-1}{r}
\node [right] at (1,0.5) {$ \quad = \sum^{'}_{p_i} \frac{\tau_2^{s + t+ p + q + w +r}}{\pi^{s + t + p + q + w + r} |p_1 |^{2 s} |p_2 |^{2 t} |p_1 + p_2|^{2 p} |p_3|^{2 q} |p_1 + p_2 + p_3|^{2 r} |p_2 + p_3|^{2 w}} = C_{\six{\hspace{-3 mm}s}{t}{p}{q}{\hspace{-1mm }w}{r}} $};
\ep
\ee
The restriction on the sum means that the discrete momenta $p_i=m_i+n_i \tau$ for integers $m_i,n_i\in \mathbb{Z}$ are all non-zero and similarly for all other propagators, \textit{i.e.}, $p_1+p_2\neq 0$, $p_2+p_3\neq 0$ and $p_1+p_2+p_3\neq 0$. We have already solved momentum conservation in the above expression and the loop momenta are labelled as
\be
\raisebox{-0.3\height}{\begin{tikzpicture}[thick,decoration={
    markings,
    mark=at position 0.5 with {\arrow{>}}}
    ] 
  \draw[postaction=decorate] (0,0)--(0,2);
  \draw[postaction=decorate] (0,0)--(1.73,-1);  
  \draw[postaction=decorate] (-1.73,-1)--(0,0);
  \draw[postaction=decorate] (-1.73,-1)--(0,2);
  \draw[postaction=decorate] (0,0)--(0,2);
  \draw[postaction=decorate] (1.73,-1)--(-1.73,-1);
  \draw[postaction=decorate] (0,2)--(1.73,-1);  
  \draw (-1.2,0.8) node {$p_1$};
  \draw (-0.3,0.7) node {$p_2$};  
  \draw (1.6,0.7) node {$p_1+p_2$};  
  \draw (0.7,-0.1) node {$p_3$};  
  \draw (0,-1.3) node {$p_1+p_2+p_3$};    
  \draw (-0.2,-0.6) node {$p_2+p_3$};    
\end{tikzpicture}}  
\ee
As is well-known, the tetrahedron has point symmetry group $\mf{S}_4$ acting on it. An explicit form of the action of this symmetric group on the graph can be found for example in~\cite{Basu:2014hsa}. For a tetrahedral modular graph function $C\!\!\raisebox{-0.3\height}{\six{s}{t}{p}{q}{w}{r}}$ we will call $s+t+p+q+r+w$ the \textit{weight} of the function. The genuine first non-trivial case arises at weight $6$ and was treated already in~\cite{Basu:2015ayg}. We will re-derive it within our more general analysis below.

Without solving momentum conservation the tetrahedral modular graph function $C\!\!\raisebox{-0.3\height}{\six{s}{t}{p}{q}{w}{r}} $ can be expressed in the symmetric way through six lattice sums and four Kronecker deltas preserving momentum conservation at each vertex $z_{i}$:
\be
C_{\six{\hspace{-2 mm}1}{2}{3}{4}{6}{5}}  = \sump_{(m_{i},n_{i}) \in \mathbb{Z}^{2}} \delta^{z_i}_{m 0} \delta^{z_{i}}_{n 0} \prod_{j = 1}^{6}  \frac{\tau^{s_j}_{2}}{\pi^{s_j} |m_{j} + n_{j} \tau|^{2  s_j}}\,,
\ee
where the labels $1,\ldots,6$ on the left-hand side stand for the parameters $s_1,\ldots,s_6$ appearing on the right-hand side.

Using either graphical methods or the sum representation, we can evaluate the modular Laplacian on these tetrahedral modular graph functions to be
\begin{align}
\label{eq:LC6}
\bigl(\Laplace - \omega \bigr)  C_{\six{\hspace{-2 mm}s}{t}{p}{q}{w}{r}}  &=st \Bigg(
    C_{\six{\hspace{-2 mm}$\bf  -$1}{1}{}{}{}{}} 
  +C_{\six{1}{\hspace{-2 mm}$\bf  -$1}{}{}{}{}} 
  +C_{\six{1}{1}{\hspace{-2 mm}$\bf  -$2}{}{}{}} 
  -2C_{\six{1}{}{\hspace{-2 mm}$\bf  -$1}{}{}{}} 
  -2C_{\six{}{1}{\hspace{-2 mm}$\bf  -$1}{}{}{}}   
\Bigg)\nn\\
&\hspace{10mm}\textrm{+ 11 other adjacent pairs of lines}\nn\\
&\quad+sq\Bigg(
    C_{\six{1}{\hspace{-2 mm}$\bf  -$2}{}{1}{}{}} 
  +C_{\six{1}{}{\hspace{-2 mm}$\bf  -$2}{1}{}{}} 
  +C_{\six{1}{}{}{1}{\hspace{-2 mm}$\bf  -$2}{}} 
  +C_{\six{1}{}{}{1}{}{\hspace{-2 mm}$\bf  -$2}}   
  +2 C_{\six{1}{}{\hspace{-2 mm}$\bf  -$1}{1}{\hspace{-2 mm}$\bf  -$1}{}} 
  +2 C_{\six{1}{\hspace{-2 mm}$\bf  -$1}{}{1}{}{\hspace{-2 mm}$\bf  -$1}}    \nn\\
  &\hspace{15mm}
    -2C_{\six{1}{}{\hspace{-2 mm}$\bf  -$1}{1}{}{\hspace{-2 mm}$\bf  -$1}} 
    -2C_{\six{1}{\hspace{-2 mm}$\bf  -$1}{\hspace{-2 mm}$\bf  -$1}{1}{}{}} 
    -2C_{\six{1}{}{\hspace{-2 mm}$\bf  -$1}{1}{}{\hspace{-2 mm}$\bf  -$1}} 
    -2C_{\six{1}{}{}{1}{\hspace{-2 mm}$\bf  -$1}{\hspace{-2 mm}$\bf  -$1}} 
\Bigg) \nn\\
&\hspace{10mm}\textrm{+ 2 other opposite pairs of lines}
\end{align}
with the `eigenvalue' $\w$ being
\be
\label{eq:om}
\w = s (s - 1) + p (p-1) + q (q-1) + r (r-1) + t (t-1) + w (w-1)-2(w p + q s + r t)
\ee
The final mixed term does not arise for the `sunset' functions of~\cite{DHoker:2015foa} and is formed as the sum over the three sets of non-adjacent (opposite) lines in the diagram. The notation in~\eqref{eq:LC6} means that the indices on the corresponding lines of the diagrams are increased or decreased in the indicated places while maintaining the labels on the left-hand side of the equation. Thus
\begin{align}
\label{eq:6short}
\raisebox{-0.5\height}{\scalebox{1.5}{\six{\hspace{-2 mm}$\bf  -$1}{}{}{}{1}{}}}  \hspace{2mm}\equiv \raisebox{-0.5\height}{\scalebox{1.5}{\six{\hspace{-3 mm}s$\bf -$1}{t}{p}{q}{w\!$\bf +$\!1}{r}}}
\end{align}
We present some details on the derivation of~\eqref{eq:LC6} in appendix~\ref{app:LC6}.

\subsection{Degeneration limits}

Below we will also require some degeneration limits of the tetrahedral graphs when some of the vertices come together. These are
\be
\bp
\point{1}{-1}
\point{-1}{-1}
\point{0}{1}

\draw[thick]  (-1,-1) -- (1,-1);

\draw [thick] (-1,-1) .. controls (-1.4,0) .. (0,1);
\draw [thick] (-1,-1) .. controls (-0.6,0) .. (0,1);

\draw [thick] (1,-1) .. controls (1.4,0) .. (0,1);
\draw [thick] (1,-1) .. controls (0.6,0) .. (0,1);

\node [above] at (0,1) {$\scriptstyle z_{1}$};

\node [below] at (-1,-1) {$\scriptstyle z_{2}$};

\node [below] at (1,-1) {$\scriptstyle z_{3}$};

\greens{-1.2}{0}{s}
\greens{-0.5}{0}{t}

\greens{0.5}{0}{p}
\greens{1.2}{0}{q}

\greens{0}{-1}{r}

\node [right] at (1,0) {$\quad \,\,\, = \sum^{'}_{p_i} \frac{\tau_2^{s + t + p + q + r}}{\pi^{s + t + p + q + r} |p_1 |^{2 s} |p_2 |^{2 t} |p_3|^{2 p} |p_1 + p_2|^{2 r} |p_1 + p_2 + p_3|^{2 q} } =  C_{\five{\hspace{-2mm}s}{\hspace{-2mm}t}{\hspace{-2mm}p}{\hspace{-2mm}q}{r}}(\tau,\bar{\tau})\, , $};
\ep  
\ee

\be
\bp
\point{0}{-1}
\point{0}{1}

\draw [thick] (0,-1) .. controls (-1.4,0) .. (0,1);
\draw [thick] (0,-1) .. controls (-0.6,0) .. (0,1);

\draw [thick] (0,-1) .. controls (1.4,0) .. (0,1);
\draw [thick] (0,-1) .. controls (0.6,0) .. (0,1);

\node [above] at (0,1) {$\scriptstyle z_{1}$};

\node [below] at (0,-1) {$\scriptstyle z_{2}$};

\greens{-1.1}{0}{s}
\greens{-0.5}{0}{t}

\greens{0.5}{0}{p}
\greens{1.1}{0}{q}

\node [right] at (1,0) {$\quad \,\,\, = \sum^{'}_{p_i} \frac{\tau_2^{s + t + p +q}}{\pi^{s + t + p +q} |p_1 |^{2 s} |p_2 |^{2 t} |p_3|^{2 p} |p_1 + p_2 + p_3|^{2 q} } = C_{\four{\hspace{- 3 mm} s}{\hspace{- 3 mm} t}{\hspace{- 3 mm} p}{\hspace{- 3 mm} q}}(\tau,\bar{\tau})$};
\ep 
\ee 

\allowdisplaybreaks{
As in~\eqref{eq:CabcSpec}, setting one of the values in $C\!\!\raisebox{-0.3\height}{\six{s}{t}{p}{q}{w}{r}} $ to the value $0$ or $-1$ leads to a simplification in the modular graph functions. Some of simplifications are multi-term identities.
\begin{subequations}
\label{eq:simpT}
\begin{align}
C_{\six{\hspace{-2 mm}s}{t}{p}{q}{w}{0}} &=  C_{\five{s}{p}{w}{q}{t}} -   C_{\three{s + w}{q + p}{t}}\,,\\
 \label{eq:simp1}
C_{\six{\hspace{-2 mm}s}{t}{p}{q}{w}{$\bf{-}$1}}  + C_{\six{\hspace{-2 mm}s}{t}{p}{w}{q}{$\bf{-}$1}} &=  C_{\five{\,\,\,s$\bf-$1}{\,\,\,\,\,p}{w}{q}{t}}  + C_{\five{s}{\hspace{-2mm}p}{\hspace{-2mm}w$\bf-$1}{\,q}{t}} + C_{\five{s}{\hspace{2mm}p$\bf-$1}{w}{q}{t}} + C_{\five{s}{\hspace{-2mm} p}{\hspace{-9mm}w}{\hspace{-4mm}q$\bf-$1}{t}}
 - C_{\five{s}{p}{w}{q}{t$\bf-$1}}\,,\\
C_{\five{s}{p}{q}{r}{0}} &=  C_{\four{s}{p}{q}{r}} - E_{s + p} E_{q + r} \,,\\
C_{\five{0}{s}{p}{q}{r}} &= E_{s} C_{\three{r}{p}{q}} - C_{\three{s + r}{p}{q}} \,,\\
C_{\five{s}{p}{q}{r}{$\bf-$1}} + C_{\five{s}{q}{p}{r}{$\bf-$1}} + C_{\five{s}{r}{p}{q}{$\bf-$1}} &=  C_{\four{\hspace{5 mm}s$\bf-$1}{\hspace{7 mm}p}{q}{r}} + C_{\four{s}{\hspace{3 mm}p$\bf-$1}{\hspace{2 mm}q}{r}} + C_{\four{s}{\hspace{-2 mm}p}{\hspace{-2 mm}q$\bf-$1}{r}} + C_{\four{s}{p}{\hspace{-6 mm}q}{\hspace{-4 mm}r$\bf-$1}} \,,\\
C_{\five{$\bf-$1}{s}{p}{q}{r}} &= E_{s-1} C_{\three{r}{p}{q}} +E_{s} C_{\three{r$\bf-$1}{p}{q}}\,,\\
C_{\four{0}{s}{t}{p}} &= E_{s} E_{t} E_{p} - C_{\three{s}{t}{p}}\,,\\
C_{\four{\hspace{2mm}$\bf{-}$1}{s}{t}{p}} &= E_{s-1} E_{t} E_{p} + E_{s} E_{t-1} E_{p} + E_{s} E_{t} E_{p-1}\,,\\
\label{eq:C3_simp}
 C_{\three{s}{t}{0}}&= E_{s} E_{t} - E_{s + t}\,,\\
\label{eq:C3_simp1}
 C_{\three{s}{t}{$\bf-$1}} &= E_{s-1} E_{t} +  E_{s} E_{t - 1} \,.
\end{align}
\end{subequations}
The last two already appeared in~\eqref{eq:CabcSpec}. The identities above can be derived most easily from the sum representation of the modular graph functions.}

\subsection{Laplace equations at low weights}

We now evaluate explicitly~\eqref{eq:LC6} for $C\!\!\raisebox{-0.3\height}{\six{s}{t}{p}{q}{w}{r}} $ for low weights $s+t+p+q+r+w$ starting from weight $6$.

\subsubsection{Laplace equation at weight $6$}

In order to illustrate the use of these equations, we re-derive the Laplace equation for the simplest non-trivial tetrahedral modular graph function that appears for weight $6$. From~\eqref{eq:LC6} one finds
\be
(\Laplace + 6 )C_{\six{1}{1}{1}{1}{1}{1}} = 12 \, C_{\six{$-$1}{1}{1}{1}{2}{2}} + 12 \, C_{\six{$-$1}{1}{2}{1}{2}{1}} - 24 C_{\six{0}{1}{1}{1}{2}{1}} - 24 \, C_{\six{0}{0}{2}{1}{2}{1}} + 12 \, C_{\six{0}{1}{2}{0}{2}{1}}  \, .
\ee
We simplify the right hand-side of the equation using equations \eqref{eq:simpT}:
\begin{subequations}
\begin{align}
C_{\six{$-$1}{1}{1}{1}{2}{2}}  + C_{\six{$-$1}{1}{2}{1}{2}{1}}  &= 2 E_{2}  \, C_{\three{1}{1}{2}} + E_{3}^2 - 2 C_{\three{1}{2}{3}} + 2 C_{\five{1}{1}{1}{2}{1}} - C_{\four{1}{1}{2}{2}} \,,\\
C_{\six{0}{0}{2}{1}{2}{1}} &= E_2 C_{\three{1}{1}{2}} - 2 C_{\three{1}{2}{3}}  \,,\\
C_{\six{0}{1}{2}{0}{2}{1}} &= E_6 - 2 E_3^2 + C_{\four{1}{1}{2}{2}} \,,\\
C_{\six{0}{1}{1}{1}{2}{1}} & = C_{\five{1}{1}{1}{2}{1}} - C_{\three{1}{2}{3}}\,.
\end{align}
\end{subequations}
Putting the results together we obtain
\begin{align}
\label{eq:L111111}
(\Laplace + 6 )C_{\six{1}{1}{1}{1}{1}{1}} = 48 C_{\three{1}{2}{3}} - 12 E_3^2 + 12 E_6 \, .
\end{align}
This equation was derived in this form first in \cite{Basu:2015ayg} and is relevant for determining the $D^{12}R^4$ correction at one loop.

\subsubsection{Laplace equation at weight $7$}

At weight $7$ there is only a single genuine tetrahedral modular graph function associated with the diagram
\begin{align}
\resizebox{.1\textwidth}{!}{
\begin{tikzpicture}
\draw [line width=1.5mm, gray, opacity = 0.3] (2,-1) -- (-2,-1) -- (0,2) -- (2,-1) -- (0,0) -- (0,2) ;
\draw [line width=1.5mm, gray, opacity = 0.3] (-2,-1) -- (0,0);
\greensB{-1}{0.5}{2}
\greensB{0}{1}{1}
\greensB{1}{0.5}{1}
\greensB{1}{-0.5}{1}
\greensB{-1}{-0.5}{1}
\greensB{0}{-1}{1}
\end{tikzpicture}}
\end{align}
Plugging this into the equation~\eqref{eq:LC6}, we find in a first instance
\begin{align}
(\Laplace + 6 )C_{\six{2}{1}{1}{1}{1}{1}} &=  8 C_{\six{$-$1}{1}{1}{1}{3}{2}}  + 8 C_{\six{$-$1}{1}{2}{1}{3}{1}}  + 2 C_{\six{$-$1}{1}{1}{2}{2}{2}}  + 2 C_{\six{$-$1}{1}{2}{2}{2}{1}} + 8 C_{\six{$-$1}{1}{2}{1}{2}{2}} - 8 C_{\six{0}{0}{2}{1}{3}{1}}\nn\\
&\quad - 8 C_{\six{0}{0}{2}{1}{2}{2}}  - 8 C_{\six{0}{0}{3}{1}{2}{1}} - 8 C_{\six{0}{1}{1}{1}{3}{1}} - 10 C_{\six{0}{1}{1}{1}{2}{2}}   - 4 C_{\six{0}{1}{1}{2}{2}{1}}  + 8 C_{\six{0}{1}{2}{0}{3}{1}} \nn\\
&\quad + 4 C_{\six{0}{1}{2}{0}{2}{2}}  -6 C_{\six{0}{1}{2}{1}{2}{1}} - 4 C_{\six{0}{1}{2}{1}{1}{2}}  \, .
\end{align}
For the simplifications we use again~\eqref{eq:simpT} and there are many cancellations.
Combining all the terms together the final Laplace equation at weight $7$ is
\begin{align}
(\Laplace + 6 )C_{\six{2}{1}{1}{1}{1}{1}} =  - 2 C_{\five{1}{2}{1}{2}{1}} + 30 C_{\three{1}{2}{4}} + 18 C_{\three{1}{3}{3}} + 8 C_{\three{2}{2}{3}} - 12 E_3 E_4 + 12 E_7\,.
\end{align}

\subsection{Generating function, its Laplace equation and \texorpdfstring{$\mf{sl}(3)$}{sl(3,R)}}

For understanding more general properties of the spectrum of the Laplacian on tetrahedral modular graph functions, it is useful to consider a generating function, similar to the one introduced in~\cite{DHoker:2015foa}. For the tetrahedral graphs considered here we write it as
\begin{align}
\cW_{\six{\hspace{-2 mm}t{\scriptscriptstyle_1}}{t{\scriptscriptstyle_2}}{t{\scriptscriptstyle_3}}{t{\scriptscriptstyle_4}}{t{\scriptscriptstyle_6}}{t{\scriptscriptstyle_5}}} &= \sum_{s,t,p,q,w,r = 1}^{\infty} t^{s-1}_1 t^{t-1}_2 t^{p-1}_3 t^{q-1}_4 t^{r-1}_5 t^{w-1}_6 C_{\six{\hspace{-2 mm}s}{t}{p}{q}{w}{r}} \, .
\end{align}
In terms of the lattice sum this can be thought of as considering massive propagators between the vertices $z_i$
\be
\cW(t_i, \tau, \bar{\tau}) = \sump_{(m_{i},n_{i}) \in \mathbb{Z}^{2}} \delta^{z_i}_{m 0} \delta^{z_{i}}_{n 0} \prod_{j = 1}^{6}  \frac{\tau_{2}}{\pi |m_{j} + n_{j} \tau|^{2} - t_r \tau_2}\,.
\ee

We will now determine the action of the Laplace operator on $\cW$ from the Laplace equation~\eqref{eq:LC6}. We begin with the `eigenvalue' $\omega$ shown in~\eqref{eq:om}.  The left-hand side of the Poisson equations can be expressed using the relation
\be
\sum_{s,t,p,q,w,r = 1}^{\infty} t^{s-1}_1 t^{t-1}_2 t^{p-1}_3 t^{q-1}_4 t^{r-1}_5 t^{w-1}_6 s(s-1) C_{\six{\hspace{-2 mm}s}{t}{p}{q}{w}{r}} = t_1 \partial^2_{1} (t_1 \cW_{\six{\hspace{-2 mm}t{\scriptscriptstyle_1}}{t{\scriptscriptstyle_2}}{t{\scriptscriptstyle_3}}{t{\scriptscriptstyle_4}}{t{\scriptscriptstyle_6}}{t{\scriptscriptstyle_5}}} )  \, .
\ee
Here we use the notation $\partial_i\equiv \partial/\partial_{t_i}$ as a short-hand. This part can be rewritten for all legs as $\left(\sum_{i = 1}^{6} t_i \partial^2_{i} t_i \right)\cW$. The mixed terms in $\omega$ of~\eqref{eq:om} can be written in terms of $t$-derivatives as 
\be
\sum_{s,t,p,q,w,r = 1}^{\infty} t^{s-1}_1 t^{t-1}_2 t^{p-1}_3 t^{q-1}_4 t^{r-1}_5 t^{w-1}_6 w p C_{\six{\hspace{-2 mm}s}{t}{p}{q}{w}{r}} = \partial_{3} \partial_{6} (t_3 t_6 \cW_{\six{\hspace{-2 mm}t{\scriptscriptstyle_1}}{t{\scriptscriptstyle_2}}{t{\scriptscriptstyle_3}}{t{\scriptscriptstyle_4}}{t{\scriptscriptstyle_6}}{t{\scriptscriptstyle_5}}} )  \, .
\ee
We therefore deduce that
\begin{align}
\label{eq:omgen}
&\sum_{s,t,p,q,w,r = 1}^{\infty} t^{s-1}_1 t^{t-1}_2 t^{p-1}_3 t^{q-1}_4 t^{r-1}_5 t^{w-1}_6 \omega C_{\six{s}{t}{p}{q}{w}{r}} 
=\left(\sum_{i=1}^6 t_i \partial_i^2 t_i -2 \sum_{i=1}^3 \partial_i \partial_{i+3} t_i t_{i+3}\right)  \cW_{\six{\hspace{-2 mm}t{\scriptscriptstyle_1}}{t{\scriptscriptstyle_2}}{t{\scriptscriptstyle_3}}{t{\scriptscriptstyle_4}}{t{\scriptscriptstyle_6}}{t{\scriptscriptstyle_5}}} \nn\\
&\hspace{20mm}= \left(\cD^2-\cD-6- 2 \sum_{<i,j>} t_i t_j \partial_i \partial_j - 4 \sum_{i=1}^3 t_i t_{i+3} \partial_i \partial_{i+3} \right) \cW_{\six{\hspace{-2 mm}t{\scriptscriptstyle_1}}{t{\scriptscriptstyle_2}}{t{\scriptscriptstyle_3}}{t{\scriptscriptstyle_4}}{t{\scriptscriptstyle_6}}{t{\scriptscriptstyle_5}}} \,,
\end{align}
where $\omega$ on the left-hand side is given by~\eqref{eq:om}. We have separated the sum over pairs of edges into the $12$ adjacent pairs $<i,j>$ and the three opposite pairs $(i,i+3)$ for $i=1,2,3$. The differential operator 
\begin{align}
\cD = \sum_{i=1}^6 t_i \partial_i
\end{align}
measures the degree of homogeneous polynomials in the $t_i$. 

Next we analyse the inhomogeneous terms on the right-hand side of the Laplace equation~\eqref{eq:LC6}. We will use again the short-hand~\eqref{eq:6short} to indicate a number of propagators different from the standard one in $C\!\!\raisebox{-0.3\height}{\six{s}{t}{p}{q}{w}{r}}$. As a rule of thumb, any shifted index will be associated with a shifted power on the corresponding variable $t_i$ in the generating function. Shifting the summation back to the standard range will produce `boundary terms' where some of the edges have the special values that also appear in~\eqref{eq:simpT}. Generally, only the edges with negative shifts will contribute to these boundary terms; the positive $+1$ shifts are innocuous as they only appear for the edges whose power also multiplies the corresponding contribution.

Let us consider as an example the first term on the right-hand side of~\eqref{eq:LC6} that contains an adjacent pair of lines:
\begin{align}
\label{eq:RHSgen}
&\sum_{s,t,p,q,w,r = 1}^{\infty} t^{s-1}_1 t^{t-1}_2 t^{p-1}_3 t^{q-1}_4 t^{r-1}_5 t^{w-1}_6 st C_{\six{1}{\hspace{-2 mm}$\bf  -$1}{}{}{}{}} 
=  \partial_1 \partial_2\!\!\left[\! t_1 t_2 \sum_{\substack{p,q,r,w = 1\\t = 0 \\ s= 2}}  t^{s-2}_1 t^{t}_2 t^{p-1}_3 t^{q-1}_4 t^{r-1}_5 t^{w-1}_6C_{\six{}{}{}{}{}{}}\right] \nn\\
&\hspace{20mm}=  \partial_1 \partial_2 \left[t_2^2 \sum_{\substack{s,p,q,r,w = 1\\t = 0 }}  t^{s-1}_1 t^{t-1}_2 t^{p-1}_3 t^{q-1}_4 t^{r-1}_5 t^{w-1}_6C_{\six{}{}{}{}{}{}} \right]\nn\\
&\hspace{20mm}=  \partial_1 \partial_2 \left[t_2^2 \cW_{\six{\hspace{-2 mm}t{\scriptscriptstyle_1}}{t{\scriptscriptstyle_2}}{t{\scriptscriptstyle_3}}{t{\scriptscriptstyle_4}}{t{\scriptscriptstyle_6}}{t{\scriptscriptstyle_5}}}
  + t_2  \cW_{\six{\hspace{-2 mm}t{\scriptscriptstyle_1}}{0}{t{\scriptscriptstyle_3}}{t{\scriptscriptstyle_4}}{t{\scriptscriptstyle_6}}{t{\scriptscriptstyle_5}}}\right]\,.
\end{align}
The last term comes from the $t=0$ term in the sum and we have introduced the notation
\begin{align}
\cW_{\six{\hspace{-2 mm}t{\scriptscriptstyle_1}}{0}{t{\scriptscriptstyle_3}}{t{\scriptscriptstyle_4}}{t{\scriptscriptstyle_6}}{t{\scriptscriptstyle_5}}} =  \sum_{s,p,q,r,w = 1}  t^{s-1}_1  t^{p-1}_3 t^{q-1}_4 t^{r-1}_5 t^{w-1}_6C_{\six{}{0}{}{}{}{}} =\cW_{\six{\hspace{- 1 mm}1}{2}{3}{4}{6}{5}} \Big|_{2\to0}\
\end{align}
for the generating function of degenerate tetrahedral graph functions. The function $C\!\!\raisebox{-0.3\height}{\six{}{0}{}{}{}{}}$ could in principle be simplified using~\eqref{eq:simpT}, but it is more compact to leave it in this form. We see that a term on the right-hand side of~\eqref{eq:LC6} contributes both to a differential operator acting on $\cW\!\!\raisebox{-0.3\height}{\six{\hspace{-2 mm}t{\scriptscriptstyle_1}}{t{\scriptscriptstyle_2}}{t{\scriptscriptstyle_3}}{t{\scriptscriptstyle_4}}{t{\scriptscriptstyle_6}}{t{\scriptscriptstyle_5}}}$ and to degenerate boundary terms. 

Manipulations similar to~\eqref{eq:RHSgen} can be performed for all the adjacent lines $\langle i,j\rangle$ and opposite lines in~\eqref{eq:LC6}. Summing up all the contributions then gives
\begin{align}
\label{eq:LG6}
\bigl( \Laplace - \cD^2 + \cD + 6\bigr) \cW_{\six{\hspace{- 1 mm}1}{2}{3}{4}{6}{5}}  &=  \sum_{V_{ijk}} \bigl( t^2_i + t^2_j + t^2_k - 2 t_i  t_j - 2 t_i t_k - 2 t_j t_k \bigr)  (\partial_{j} \partial_{k}  + \partial_{i} \partial_{j} + \partial_{i} \partial_{k}) \cW_{\six{\hspace{- 1 mm}1}{2}{3}{4}{6}{5}}  \nn\\ 
&\quad\quad+ \sum_{i = 1}^{3} ( (t_{i+1} -  t_{i+2}  + t_{i+4}  - t_{i + 5})^2 - 4 t_{i} t_{i + 3} )\partial_{i} \partial_{i+3} \cW_{\six{\hspace{- 1 mm}1}{2}{3}{4}{6}{5}} + \cR \, ,
\end{align}
where the two sums arise from the adjacent lines $\langle i,j\rangle$ coming together at a vertex $V_{ijk}$ and the three pairs of opposite lines. We have also moved some of the terms in~\eqref{eq:omgen} to the right. The term $\cR$ contains all the contributions from degenerate diagrams and is given explicitly by
\begin{align}
\cR&= - \sum_{i=1}^6 \sum_{V_{pqi}} \left( \partial_p + \partial_q +(2t_p + 2t_q - t_i) \partial_p \partial_q\right)  \cW_{\six{\hspace{- 1 mm}1}{2}{3}{4}{6}{5}} \Big|_{i\to0}\nn\\
&\quad+ \sum_{i=1}^6 \sum_{\substack{p=1\\p\notin\{i,i+3\}}}^3 \left(t_i +2t_{i+3}-2t_{p+1}-2t_{p+4}\right)\partial_p \partial_{p+3} \cW_{\six{\hspace{- 1 mm}1}{2}{3}{4}{6}{5}} \Big|_{i\to0}\nn\\
&\quad+ \sum_{i=1}^6 \sum_{\Delta_{ipq}\neq \Delta_{irs}}^6 \left(\partial_p +\partial_q\right)\left(\partial_r+\partial_s\right) \cW_{\six{\hspace{- 1 mm}1}{2}{3}{4}{6}{5}} \Big|_{i\to -1}\nn\\
&\quad+\!\bigg[ 2 \partial_{1} \partial_{4} \Bigl( \cW_{\six{\hspace{- 2 mm}1}{0}{3}{4}{6}{0}} +\cW_{\six{\hspace{- 2 mm}1}{2}{0}{4}{0}{5}} -\cW_{\six{\hspace{- 2 mm}1}{0}{0}{4}{6}{5}} 
- \cW_{\six{\hspace{- 2 mm}1}{0}{3}{4}{0}{5}} - \cW_{\six{\hspace{- 2 mm}1}{2}{0}{4}{6}{0}}  - \cW_{\six{\hspace{- 2 mm}1}{2}{3}{4}{0}{0}} \Bigr) \nn\\
&\hspace{20mm}+ \textrm{two other pairs of opposite lines} \biggr] 
\end{align}
(Indices are to be understood modulo $6$.) The four terms are almost simpler to describe in words: The first term is a sum over the six edges $i$ with $p$ and $q$ joining line $i$ at a vertex; so for $i=1$ it would be the two cases $(p,q)\in \{ (2,3), (5,6)\}$ because edge $4$ is opposite of edge $1$. The second term is also an outer sum over the edges $i$ and the inner sum produces the two pairs of opposite edges not containing $i$; for $i=1$ it would be $(t_1+2t_4-2t_3-2t_6)\partial_2\partial_5+ (t_1+2t_4-2t_2-2t_5)\partial_3\partial_6$ . The third term is also a sum over all the edges $i$ and the inner sum contains the two triangles that can be formed containing the edge $i$; for $i=1$ this means $(\partial_3+\partial_5)(\partial_2+\partial_6)$. The very last term comes from the three pairs of opposite edges and has two degenerations in the generating function with sign distributions depending on whether the degenerations are on opposite or adjacent edges. 
In the case considered in~\cite{DHoker:2015foa}, all boundary terms could be simplified to Eisenstein series or products thereof by virtue of~\eqref{eq:CabcSpec}; here the source terms are of a more complicated nature but still simpler compared to the full tetrahedral function. This can be seen in the examples above. 

As is shown in appendix~\ref{app:SL3C}, the Laplace equation~\eqref{eq:LG6} for the generating function can be rewritten using the quadratic Casimir of the split real $\mf{sl}(3)$. Upon inserting the Casimir 
\begin{align}
\mathfrak{C}^2 &= \frac43 \mathfrak{D}^2 + 2 \mathfrak{D} +
\sum_{V_{ijk}} (t_i^2+t_j^2+t_k^2 - 2t_i t_j -2 t_j t_k -2t_k t_i) (\partial_i \partial_j +\partial_j \partial_k +\partial_k\partial_i)\nn\\
& \quad\quad\quad + \sum_{i=1}^3 ((t_{i+1} - t_{i+2} -t_{i+4} +t_{i+5})^2 -4 t_i t_{i+3}) \partial_i \partial_{i+3}\,.
\end{align}
that is derived in appendix~\ref{app:SL3C}, we obtain
\begin{align}
\label{eq:LapCas}
\left( \Laplace -\mf{C}^2 +\frac13\left(\cD+3\right)\left(\cD+6\right)\right) \cW = \cR\,,
\end{align}
where we have suppressed all dependence on the variables $\tau$ and $t_i$ of the generating function $\cW$ and the remainder $\cR$. Solving the spectral problem means finding the spectrum of the operator
\begin{align}
\label{eq:L2C2}
\mf{L}^2 = \mf{C}^2 -\frac13\left(\cD+3\right)\left(\cD+6\right)\,.
\end{align}
We note that the occurrence of $\mf{sl}(3)$ is slightly misleading, there is no actual $\mf{sl}(3)$ symmetry of the spectrum; what we will be interested in is the number of $\mf{S}_4$ singlets in representations of $\mf{sl}(3)$. This situation is generalisation of the case discussed in appendix~\ref{app:2loop} for the sunset graph underlying the functions $C_{(s,t,p)}$.

\subsection{General spectral considerations}

We now try to find a basis of $\mf{C}^2$-eigenfunctions of homogeneous polynomials in the six $t_i$ that transform under $\mathfrak{S}_4$ in such a way that the polynomials are invariant. The action of $\mf{S}_4$ is induced from that of $\mf{sl}(3)$ mentioned above and exhibited in appendix~\ref{app:SL3C}. That is, we are looking for functions that satisfy
\begin{align}
\mathfrak{D} \mathcal{W}_{w,p_1,p_2} & = (w-6) \mathcal{W}_{w,p_1,p_2}\,,\nn\\
\mathfrak{C}^2 \mathcal{W}_{w,p_1,p_2} & = \frac13 (p_1^2+p_1p_2+p_2^2+3p_1+3p_2)\mathcal{W}_{w,p_1,p_2} \equiv\lambda_{p_1,p_2} \mathcal{W}_{w,p_1,p_2}
\end{align}
and are $\mf{S}_4$-invariant. We here are using the $\mf{sl}(3)$ quadratic Casimir operator $\mf{C}^2$ that was defined in~\eqref{eq:CasSL3} that is normalised such that when acting on an $\mf{sl}(3)$ representation with Dynkin labels $[p_1,p_2]$ it has eigenvalue $\lambda_{p_1,p_2} =  \frac13 (p_1^2+p_1p_2+p_2^2+3p_1+3p_2)$. We note that the dimension of the $\mf{sl}(3)$ representation with labels $[p_1,p_2]$ is given by
\begin{align}
\dim\, [p_1,p_2] = \frac{(p_1+p_2+2)(p_1+1)(p_2+1)}{2}\,.
\end{align}

The eigenvalue $k$ of the scaling operator $\mf{D}$ is related to the weight $w=\sum_i s_i$ of $C\!\!\raisebox{-0.3\height}{\six{s_1}{s_2}{s_3}{s_4}{s_6}{s_5}}$ by
\begin{align}
k = w-6\,.
\end{align}
The value $k$ corresponds to the degree of the homogeneous polynomial in the $t_i$.

The operators $\mf{D}$ and $\mf{C}^2$ commute and the eigenvalue of the modular Laplacian is then 
\begin{align}
\label{eq:CD}
\left( \Delta -\lambda_{p_1,p_2} +\frac13w(w-3)\right) \mathcal{W}_{w,p_1,p_2} = \mathcal{R}_{w,p_1,p_2}
\end{align}
according to~\eqref{eq:L2C2}. We note that $\mf{D}$ and $\mf{C}^2$ do not form a complete set of commuting semi-simple operators. There are still degeneracies in the eigenspace labelled by $(w,p_1,p_2)$. The form of the right-hand side above depends on which particular eigenfunction in the $(w,p_1,p_2)$ we are considering.

Mimicking the analysis of the two-loop sunset graph in appendix~\ref{app:2loop}, we need to first identify the correct representations of $\mf{sl}(3)$. The representation on six variables $t_i$ is the irreducible six-dimensional representation that we choose as $[2,0]$ by some labelling convention for the nodes of the $A_2$ Dynkin diagram. For homogeneous polynomials of degree $k$ we need to form its symmetric tensor product series. We first compute the total number of $\mf{S}_4$-invariant functions for a given degree $k=w-6$ of the polynomial. This can be done by applying Molien's theorem to the six-dimensional representation of $\mf{S}_4$ and gives the following generating function
\begin{align}
\frac{1-q+q^2+q^4+q^6-q^7+q^8}{(1-q)^6 (1+q)^2 \left(1+q^2\right) \left(1+q+q^2\right)^2}
\end{align}
that is also documented as series \href{https://oeis.org/A003082}{A003082} in the OEIS~\cite{OEIS}. From this one can construct the total number of $\mf{S}_4$ singlets at a given order
\begin{align}
\label{eq:tetspec}
\arraycolsep=2mm
\begin{array}{c|cccccccc}
k & 0 & 1 & 2 & 3 & 4 & 5 & 6 & 7\\\hline
\mf{S}_4 \textrm{ singlets in $S^k([2,0])$} & 1 & 1 & 3 & 6 & 11 & 18 & 32 & 48
\end{array}
\end{align}
As $w=k+6$, we recognise the single tetrahedral modular graph function at weight $6$ and the single tetrahedral modular graph function at weight $7$ discussed above.

In order to separate the total number of eigenfunctions at weight $w=k+6$ into the Casimir eigenspaces of the representation $[p_1,p_2]$ we need to consider the $\mf{sl}(3)$ representations occurring in the $k$-th symmetric tensor power of the six-dimensional representation $[2,0]$ of the $t_i$ variables. This is given by 
\begin{align}
\label{eq:Sk20}
S^k ([2,0]) = \bigoplus_{i=0}^{\lfloor\frac{k}{3}\rfloor} \bigoplus_{j=0}^{\lfloor\frac{i}{2}\rfloor} [2k-6i,6j] 
\oplus \bigoplus_{i=0}^{\lfloor\frac{k-2}{3}\rfloor}  \bigoplus_{j=0}^{\lfloor\frac{i}{2}\rfloor}[2k-6i-4,6j+2]
\oplus \bigoplus_{i=0}^{\lfloor\frac{k-1}{3}\rfloor} \bigoplus_{j=0}^{\lfloor\frac{i-1}{2}\rfloor}[2k-6i-2,6j+4] .
\end{align}
All these representations of $\mf{sl}(3)$ occur with multiplicity one. The only degeneracies in the Casimir eigenvalues arise for representations related by the outer automorphism of $\mf{sl}(3)$, \textit{i.e.}, only $[p_1,p_2]$ and $[p_2,p_1]$ have the same Casimir eigenvalue, otherwise all Casimir values are different.

Unfortunately, compared to the two-loop case of appendix~\ref{app:2loop}, we do not have a direct description of all $[p_1,p_2]$ as symmetric powers of some simple representation. A notable exception is again given by the symmetric powers of the fundamental (and anti-fundamental) representation:
\begin{align}
S^p([1,0])= [p,0] \quad\quad \textrm{($S^p([0,1]) = [0,p]$)}\,.
\end{align}
As a representation of $\mf{S}_4$ the three-dimensional fundamental representation of $\mf{sl}(3)$ is the standard representation and one can choose as generators for example the matrices
\begin{align}
\begin{pmatrix}
0 & 0 & 1\\
0 & 1 & 0\\
-1 & 1 &0 
\end{pmatrix}
\,,\quad
\begin{pmatrix}
0 & 0 & -1\\
0 & -1 & 0\\
-1 & 0 &0 
\end{pmatrix}
\,,\quad
\begin{pmatrix}
0 & 0 & 1\\
1 & -1 & 1\\
1 & 0 &0 
\end{pmatrix}\,.
\end{align}

Molien's theorem gives the number of $\mathfrak{S}_4$ invariants in such representations as being generated by
\begin{align}
\frac{1-q^3+q^6}{(1-q)^3(1+q)^2(1+q^2)(1+q+q^2)}\,.
\end{align}
For low  $p$ one has for the number of $\mf{S}_4$ singlets in $[p,0]$ (or equivalently $[0,p]$)
\begin{align}
\arraycolsep=2mm
\begin{array}{c|ccccccc}
p & 0 & 1 & 2 & 3 & 4 & 5 & 6\\\hline
\mf{S}_4 \textrm{ singlets in $[p,0]$} & 1 & 0 & 1 & 0 & 2 & 0&  3 
\end{array}
\end{align}

\noindent (The series starts to be non-zero for odd $p$ soon after this.) At \textit{even} $p$ this series agrees with 
\begin{align}
\frac{1}{(1-q^2)^3(1+2q^2+2q^4+q^6)}= \frac{1}{(1-q^2)(1-q^4)(1-q^6)}
\end{align}
as can be checked using a double extension of $\mathfrak{S}_4$. The even $p$ values are the only ones of interest to us, so we might as well work with this simpler series that is identical to the two-loop case.  

For more general $[p_1,p_2]$ it is not quite clear how to get the right generating series. In table~\ref{tab:S4sing}, we present the number of $\mf{S}_4$ invariants for small even $p_1\geq p_2$. This inequality is sufficient due to the outer automorphism of $\mf{sl}(3)$ and only even values of the $p_i$ can arise in~\eqref{eq:Sk20}.
\begin{table}[t!]
\centering
\caption{\label{tab:S4sing}\sl Number of $\mf{S}_4$ singlets in various representations of $\mf{sl}(3)$ together with their Casimir eigenvalues and the weight of the tetrahedral modular graph function for which they arise first. The list is ordered by the dimension of the representation and complete up to dimension $162$.}

\vspace{2mm}
\begin{tabular}{c||c|c|c|c}
$\mf{sl}(3)$ & first occurrence & Casimir & $\dim\, [p_1,p_2]$ & Number of \\
representation & at weight $w=k+6$ & value  $\mf{C}^2$ & & $\mf{S}_4$ singlets \\\hline
$[0,0]$ & $6$ & $0$ & $1$ & $1$\\
$[2,0]$ & $7$ & $\frac{10}3$ & $6$ & $1$ \\
$[4,0]$ & $8$ & $\frac{28}3$ & $15$ & $2$ \\
$[2,2]$ & $9$ & $8$ &$27$ & $2$ \\
$[6,0]$ & $9$ & $18$ & $28$ & $3$ \\
$[8,0]$ & $10$  & $\frac{88}{3}$ & $45$ & $4$ \\
$[4,2]$ & $10$ & $\frac{46}{3}$ & $60$ & $4$ \\
$[10,0]$ & $11$  & $\frac{130}{3}$ & $66$ & $5$ \\
$[12,0]$ & $12$  & $60$ & $91$ & $7$ \\
$[6,2]$ & $11$  & $\frac{106}{3}$ & $105$ & $6$ \\
$[4,4]$ & $12$ & $24$ & $125$ & $7$ \\
$[8,2]$ &  $12$   & $38$ & $162$ & $9$ 
\end{tabular}
\end{table}

The eigenvalue of the modular Laplacian at a given weight $w$ then has to be calculated using~\eqref{eq:CD}.

\subsection{Eigenfunctions and Laplace equations at low weight}

In this section, we give some more examples of eigenvalues and eigenfunctions of the modular Laplacian acting on tetrahedral modular graph functions. We stress that we are using the terms `eigenvalues' and `eigenfunctions' loosely as the corresponding Laplace equations are typically inhomogeneous but the right-hand side source is of lower complexity.

The explicit eigenfunctions of the modular Laplacian at low weights can be constructed using a basis of $\mf{S}_4$-invariant homogeneous polynomials of degree $k$. We list as examples the linear and quadratic invariant homogeneous polynomials:
\begin{align}
k=1:&& p_1(t) &= t_1+t_2+t_3+t_4+t_5+t_6\,,\nn\\
k=2: && p_2^{(1)}(t)&=t_1^2+t_2^2+t_3^2+t_4^2+t_5^2+t_6^2\,,\nn\\ 
&& p_2^{(2)}(t)&=t_1 t_2 + t_1 t_3 + t_2 t_3 + t_2 t_4 + t_3 t_4 + t_1 t_5 + t_2 t_5 + t_4 t_5 + 
 t_1 t_6 + t_3 t_6 + t_4 t_6 + t_5 t_6\,,\nn\\
 && p_2^{(3)}(t)&=t_1t_4+t_3t_5 + t_2t_6\,.
\end{align}
Similar bases of $\mf{S}_4$ invariant homogeneous polynomials can be generated at any degree easily. The procedure for finding explicit eigenfunctions of $\mf{L}^2$ is then to first diagonalise the action of the $\mf{sl}(3)$ Casimir $\mf{C}^2$ on the polynomials and then convert this to inhomogeneous Laplace equations for combinations of tetrahedral modular graph functions.

\subsubsection{Weight $7$} 

The linear polynomial $p_1$ is an eigenfunction of the $\mf{sl}(3)$ Casimir $\mf{C}^2$ given in~\eqref{eq:CasSL3} according to
\begin{align}
\left(\mf{C}^2 - \frac{10}{3}\right) p_1 = 0\,.
\end{align}
The corresponding tetrahedral modular graph function is $C\raisebox{-0.5\height}{\six{2}{1}{1}{1}{1}{1}}$ with Laplace equation given already in~\eqref{eq:L111111}.

\subsubsection{Weight $8$} 

For $k=2$ the following are explicit eigenfunctions of the Casimir operator~\eqref{eq:CasSL3} in the normalisation given there:
\begin{subequations}
\label{eq:deg2ef}
\begin{align}
\left(\mf{C}^2- \frac{10}3 \right)\left( p_2^{(1)} -  p_2^{(2)} \right) &=0\,,\\
\left(\mf{C}^2  -\frac{28}3 \right)  p_2^{(1)} &=0\,,\\
\left(\mf{C}^2- \frac{28}3 \right)\left( p_2^{(2)} -  3p_2^{(3)} \right) &=0\,.
\end{align}
\end{subequations}
This is the first time a degeneracy arises in the spectrum and we have chosen some particular simple basis. These $\mf{S}_4$ invariant eigenfunctions~\eqref{eq:deg2ef} of the $\mf{sl}(3)$ Casimir $\mf{C}^2$ can be translated into combinations of tetrahedral modular graph functions as follows 
\begin{subequations}
\label{eq:Inhdeg2}
\begin{align}
\left(\Delta +4 \right) C_{\six{2}{1}{1}{2}{1}{1}} &= -16 C_{\six{-1}{3}{3}{1}{1}{1}}+32 C_{\five{1}{2}{1}{3}{1}}-24C_{\five{1}{2}{2}{2}{1}}+8C_{\five{1}{2}{1}{2}{2}} + 2 C_{\four{2}{2}{2}{2}} \nn\\
&\quad -8C_{\four{1}{2}{2}{3}} +88 C_{\three{1}{3}{4}} -12 C_{\three{2}{2}{4}} -4C_{\three{2}{3}{3}}\nn\\
&\quad +8 E_3 C_{\three{1}{2}{2}} -14 E_4^2 +20 E_8\,,\\
\left(\Delta +4 \right) \left(C_{\six{3}{1}{1}{1}{1}{1}}+C_{\six{2}{2}{1}{1}{1}{1}}\right) &=  4 C_{\six{-1}{3}{3}{1}{1}{1}} -16 C_{\five{1}{2}{1}{3}{1}}+2C_{\five{1}{2}{2}{2}{1}}-2C_{\five{1}{2}{1}{2}{2}} -\frac12 C_{\four{2}{2}{2}{2}} \nn\\
&\quad +2C_{\four{1}{2}{2}{3}} +60 C_{\three{1}{2}{5}}+40 C_{\three{1}{3}{4}} +18 C_{\three{2}{2}{4}} +16 C_{\three{2}{3}{3}}\nn\\
&\quad -2 E_3 C_{\three{1}{2}{2}} -\frac{11}{2} E_4^2 -20E_3 E_5 +24 E_8\,,\\
\left(\Delta +10 \right) \left(C_{\six{2}{1}{1}{2}{1}{1}}+3C_{\six{2}{2}{1}{1}{1}{1}}\right) &=  -4 C_{\six{-1}{3}{3}{1}{1}{1}} +8 C_{\five{1}{2}{1}{3}{1}}-6C_{\five{1}{2}{2}{2}{1}}-4C_{\five{1}{2}{1}{2}{2}} +\frac12 C_{\four{2}{2}{2}{2}} \nn\\
&\quad -2C_{\four{1}{2}{2}{3}} +36 C_{\three{1}{2}{5}}+124 C_{\three{1}{3}{4}} +42 C_{\three{2}{2}{4}} +32 C_{\three{2}{3}{3}}\nn\\
&\quad +2 E_3 C_{\three{1}{2}{2}} -\frac{61}{2} E_4^2 -12E_3 E_5 +44 E_8\,.
\end{align}
\end{subequations}
As can be seen, all right-hand sides contain the function $C\raisebox{-0.5\height}{\six{-1}{3}{3}{1}{1}{1}}$. This function cannot be reduced by means of the simplification rules given in~\eqref{eq:simpT}. However, we expect there to be an additional simplification rule that we have not been able to derive and that would simplify this function.

\subsubsection{Weight $9$} 

For $k=3$ one has in total six eigenfunctions according to~\eqref{eq:tetspec}. These separate into the $\mf{L}^2$ eigenvalues $-18$, $-10$ and $0$ with degeneracies $1$, $2$ and $3$, respectively. For $\mf{L}^2=-18$ the eigenfunction is given by
\begin{align}
\mf{L}^2=-18:\quad& C_{\six{2}{2}{1}{1}{2}{1}}  +3C_{\six{2}{2}{1}{2}{1}{1}}  \,.
\end{align}
For $\mf{L}^2=-10$ one has the following basis of two eigenfunctions
\begin{align}
\mf{L}^2=-10:\quad& C_{\six{2}{2}{1}{1}{2}{1}}  +2C_{\six{2}{2}{1}{2}{1}{1}}   +2C_{\six{3}{1}{1}{2}{1}{1}}   \,, \nn\\
&C_{\six{2}{2}{2}{1}{1}{1}}  +C_{\six{2}{2}{1}{1}{2}{1}}   +3C_{\six{2}{2}{1}{2}{1}{1}}+4C_{\six{3}{2}{1}{1}{1}{1}}  +2C_{\six{3}{1}{1}{2}{1}{1}} \,.
\end{align}
For $\mf{L}^2=0$ one has the follow three independent eigenfunctions
\begin{align}
\mf{L}^2=-10:\quad&  C_{\six{2}{2}{2}{1}{1}{1}}  +C_{\six{2}{2}{1}{2}{1}{1}}   +4C_{\six{3}{2}{1}{1}{1}{1}}+5C_{\six{4}{1}{1}{1}{1}{1}}\,,\nn\\
& -C_{\six{2}{2}{1}{1}{2}{1}}   +3C_{\six{2}{2}{1}{2}{1}{1}}+3C_{\six{3}{1}{1}{2}{1}{1}}\,,\nn\\
&  3C_{\six{2}{2}{2}{1}{1}{1}}  -2C_{\six{2}{2}{1}{1}{2}{1}} -6C_{\six{2}{2}{1}{2}{1}{1}}   -18C_{\six{3}{2}{1}{1}{1}{1}}-9C_{\six{3}{1}{1}{2}{1}{1}}\,.
\end{align}
We do not spell out the right-hand sides of the inhomogeneous Laplace equations as they are rather involved but note that, similar to~\eqref{eq:Inhdeg2} they can involve tetrahedral modular graph functions on the right-hand side with where one edge has value $-1$. Such terms possibly simplify.

\subsubsection{Weights $10$, $11$ and $12$}

For weights $10$, $11$ and $12$ we only present table~\ref{tab:higherwt} of the degeneracies of the eigenvalues of $\mf{L}^2$ and do not list the explicit eigenfunctions as they become rather involved.
\begin{table}[t]
\caption{\label{tab:higherwt}\sl Eigenvalues and degeneracies of the modular Laplacian acting on tetrahedral modular graph functions of weights $10$, $11$ and $12$.}
\vspace{2mm}

\centering
\begin{tabular}{c||c|c|c}
weight &  $\mf{L}^2$ eigenvalue & $\mf{sl}(3)$ rep. & $\mf{S}_4$ singlets \\\hline\hline
$10$ & $-20$ & $[2,0]$ & $1$ \\
& $-14$ & $[0,4]$ & $2$ \\
& $-8$ & $[8,0]$ & $4$ \\
& $6$ & $[4,2]$ & $4$\\\hline
$11$ & $-26$  & $[0,2]$ & $1$ \\
& $-20$  & $[4,0]$ & $2$ \\
& $-14$  & $[2,4]$ & $4$ \\
&  $-4$ & $[10,0]$ & $5$ \\
&  $14$ & $[6,2]$ & $6$ \\\hline
$12$ & $-36$  & $[0,0]$ & $1$\\
& $-28$   & $[2,2]$ & $2$ \\
& $-18$   & $[6,0]\oplus[0,6]$ & $6$ \\
&  $-12$  & $[4,4]$ & $7$ \\
&  $2$  & $[8,2]$ & $9$ \\
&  $24$  & $[12,0]$ & $7$ 
\end{tabular}
\end{table}

The numbers in table~\ref{tab:higherwt} can also be derived from~\eqref{eq:Sk20},~\eqref{eq:L2C2} and table~\ref{tab:S4sing}. We have additionally determined the corresponding eigenfunctions and checked that their inhomogeneous Laplace equations contain only less complex modular graph functions or tetrahedral modular graph functions that have one edge with value $-1$. 

\subsubsection{More Laplace equations}

In this section we present some additional Laplace equations, where the Laplacian is not diagonalised as in the previous examples but instead the combinations are chosen such that there are no functions with a value $-1$ on any edge remaining on the right-hand side. These together with the previous ones could be useful for finding the integrated versions of the corresponding amplitudes. In general, there remain tetrahedral modular graph functions with a similar complexity on the right-hand side.

For weight $w=8$ one has 
\begin{align}
\Laplace \biggr(  C_{\six{2}{1}{1}{2}{1}{1}} + 4 C_{\six{2}{2}{1}{1}{1}{1}} \biggr) &= -12 C_{\six{2}{1}{1}{2}{1}{1}} - 40 C_{\six{2}{2}{1}{1}{1}{1}} + 48 C_{\three{1}{2}{5}}  + 136  C_{\three{1}{3}{4}} + 60  C_{\three{2}{2}{4}} \nn\\
&\quad+44 C_{\three{2}{3}{3}}  -8 C_{\five{1}{2}{1}{2}{2}} -36 E^2_{4} -16 E_{3} E_{5} + 52  E_{8} \,.
\end{align}
The Laplacian acting on the function $C\!\raisebox{-0.4\height}{\six{3}{1}{1}{1}{1}{1}}$ does not produce any $-1$ values either.

For weight $w=9$, there are four combinations that do not produce any $-1$ edges after application of the Laplace operator. Besides the functions $C\!\raisebox{-0.4\height}{\six{4}{1}{1}{1}{1}{1}}$ and $C\!\raisebox{-0.4\height}{\six{2}{2}{2}{1}{1}{1}}$ they are
\begin{align}
( \Laplace + 18 ) \biggl( C_{\six{1}{1}{1}{2}{2}{2}}  + 3 C_{\six{1}{1}{2}{1}{2}{2}}  \biggr)= 72 C_{\three{1}{3}{5}} + 45 C_{\three{1}{4}{4}} + 36 C_{\three{2}{2}{5}} + 72 C_{\three{2}{3}{4}} - 36 E_4 E_5 + 36 E_9
\end{align}
and
\begin{align}
\Laplace \left(4 C_{\six{1}{1}{1}{1}{2}{3}}+C_{\six{1}{1}{2}{1}{3}{1}}\right) &=  -4 C_{\six{1}{1}{2}{2}{1}{2}} 
   +14 C_{\six{1}{1}{2}{1}{2}{2}} -16 C_{\six{1}{1}{1}{1}{2}{3}}+6 C_{\six{1}{1}{1}{2}{2}{2}}-16 C_{\five{3}{1}{3}{1}{1}}\nn\\
&\quad   -16 C_{\five{3}{1}{2}{2}{1}} -16 C_{\five{2}{1}{2}{3}{1}}-4 C_{\five{2}{2}{2}{2}{1}}-4 C_{\five{2}{1}{1}{2}{3}}\\
&\quad-12 E_4 C_{\three{1}{2}{2}}+12 E_4 C_{\three{1}{2}{4}}+84 C_{\three{1}{2}{6}}+128 C_{\three{1}{3}{5}}
   +96   C_{\three{1}{4}{4}}\nn\\
&\quad +48 C_{\three{2}{2}{5}}+34 C_{\three{2}{3}{4}}+22 C_{\three{3}{3}{3}} -48 E_4 E_5-20 E_3 E_6+68 E_9\,.\nn
\end{align}

For weight $w=10$, there are seven combinations that do not produce any $-1$. For weight $w=11$ there are $11$ and for weight $w=12$ there are $19$ such combinations.

\subsection*{Acknowledgements}
We are grateful to M.~Gaberdiel, J.~Gerken, S.~He, O.~Schlotterer and P.~Vanhove for useful discussions and correspondence. We would also like to thank O.~Schlotterer for useful comments on a first draft of this paper.

\newpage


\appendix

\section{Spectrum of the Laplacian on \texorpdfstring{$C_{(s,t,p)}$}{C(s,t,p)}}
\label{app:2loop}

We shall be interested in spectrum of the modular Laplacian on the function $C_{(s,t,p)}$ defined in~\eqref{eq:Cstp}. As in~\cite{DHoker:2015foa} we will introduce a generating function defined by
\begin{align}
\label{eq:Wexp1}
\mathcal{W}(t_1,t_2,t_3|\tau) = \sum_{s,t,p=1}^\infty t_1^{s-1} t_2^{t-1} t_3^{p-1} C_{(s,t,p)}(\tau)\,.
\end{align}
It follows from~\eqref{eq:LCabc} that the generating satisfies the equation
\begin{align}
\label{eq:diff2}
\left(\Delta - \mf{L}^2\right) \mathcal{W}  = \mathcal{R}\,,
\end{align}
with (using $\partial_i \equiv \partial_{t_i}$)
\begin{align}
\label{eq:L22}
\mf{L}^2 &= \mf{D}^2 + \mf{D} + (t_1^2+t_2^2+t_3^2-2t_1t_2-2t_2t_3-2t_3t_1) (\partial_1\partial_2 + \partial_2\partial_3 + \partial_3\partial_1)\,,\\
\label{eq:D2}
\mathfrak{D}&= \sum_{i=1}^3 t_i \partial_i\,,
\end{align}
and
\begin{align}
\label{eq:R2}
\mathcal{R}&=\sum_{s,t=0}^\infty \left( t_1^s t_2^t + t_2^s t_3^t + t_3^s t_1^t\right) \mathcal{R}_{st}^0 + \sum_{s,t=0}^\infty \left( t_1^s t_2^t t_3+ t_2^s t_3^t t_1+ t_3^s t_1^t t_2\right) \mathcal{R}_{st}^1 \,,\\
\mathcal{R}_{st}^0 &= 3s(t+1) E_{s+1}E_{t+2} + 3(s+1)t E_{s+2} E_{t+1} + (2-s-t-4st) E_{s+t+3}\,,\nn\\
\mathcal{R}_{st}^1 &= st \left( E_{s+2}E_{t+2}-E_{s+t+4}\right)\,.\nn
\end{align}
The `remainder' $\mathcal{R}$ is of lower complexity and represents some power series in the $t_i$ multiplying Eisenstein series or products of Eisenstein series. The spectral problem concerns the diagonalisation of the operator $\mf{L}^2$ in~\eqref{eq:L22}.

Everything in equation~\eqref{eq:diff2} is symmetric under the action of $\mathfrak{S}_3$, the symmetric group on three letters, acting on the $t_i$ in the fundamental representation.\footnote{Strictly speaking, the action of $\mathfrak{S}_3$ is originally on the Schwinger parameters $L_i$ in the fundamental representation and dually on the $t_i$. In this case, the two actions are the same.} Moreover, everything commutes with the weight operator $\mf{D}$ of~\eqref{eq:D2} that measures the degree of homogeneous polynomials in the $t_i$. Since $\mathcal{W}$ is symmetric in the $t_i$, only symmetric polynomials appear in the expansion on the right-hand side of~\eqref{eq:Wexp1}.

\subsection{\texorpdfstring{$SL(2)$}{SL(2)} Casimir in dual Schwinger space}

The following is a heuristic derivation of an exact rewriting of the differential operator $\mf{L}^2$. The vacuum two-loop diagram in cubic scalar field theory (a.k.a. sunset or melon graph) has the form (for unequal masses)
\begin{align}
\int d^D p_1 d^D p_2 \frac{1}{p_1^2+m_1^2}\frac{1}{p_2^2+m_2^2}\frac{1}{(p_1+p_2)^2+m_3^2}.
\end{align}
Using Schwinger parameters it can be related to\footnote{In this heuristic derivation, we are systematically ignoring factors of $2\pi$ etc. }
\begin{align}
\label{eq:2L}
\prod_{i=1}^3 \int_0^\infty dL_i \left( \det \Omega\right)^{-D/2} e^{-(L_1 m_1^2 +L_2 m_2^2 +L_3 m_3^2)}\,,
\end{align}
where 
\begin{align}
\Omega = \begin{pmatrix}
L_1+ L_3 & L_3\\
L_3 & L_2+ L_3
\end{pmatrix}\,.
\end{align}
This matrix of Schwinger parameters carries a natural action of $M\in PSL(2,\mathbb{R})$ by $\Omega \to M \Omega M^T$. One can even allow elements $M$ with determinant minus one here. This will happen for reflections below.

In order for the vacuum amplitude to be invariant under $PSL(2,\reals)$ one has to act correspondingly on the masses. We first rewrite this by defining $t_i = -m_i^2$ and then
\begin{align}
-(L_1 m_1^2 +L_2 m_2^2 +L_3 m_3^2) = \sum_{i=1}^3 L_i t_i = \mathrm{Tr} \, \Omega T
\end{align}
with 
\begin{align}
T = \begin{pmatrix}
t_1 & \frac12\left(t_3-t_1-t_2\right)\\
\frac12\left(t_3-t_1-t_2\right) & t_2
\end{pmatrix}\,.
\end{align}
The action of $M\in PSL(2,\mathbb{R})$ on this matrix is given by $T\to (M^{-1})^T T M^{-1}$. From this one can work out the following form of the infinitesimal generators of $PSL(2,\mathbb{R})$ in a Chevalley basis\footnote{This means that $[e,f]=h$, $[h,e]=2e$ and $[h,f]=-2f$.}
\begin{subequations}
\label{eq:diffops2}
\begin{align}
e &= (t_1 +t_2-t_3) \partial_1 + (t_1-t_2-t_3)\partial_3,&\\
f &= (t_1 +t_2-t_3) \partial_2 + (t_2-t_1-t_3)\partial_3,&\\
h &= -2t_1 \partial_1 +2t_2 \partial_2 -2(t_1-t_2)\partial_3.&
\end{align}
\end{subequations}
The quadratic Casimir is then seen to agree with~\eqref{eq:L22}
\begin{align}
\label{eq:Cas2}
\mathfrak{C}^2 &= \frac12ef+\frac12 fe + \frac 14h^2 \\
&= (t_i \partial_i)^2 + t_i \partial_i +(t_1^2+t_2^2+t_3^2-2t_1t_2-2t_2t_3-2t_3t_1) (\partial_1\partial_2 + \partial_2\partial_3 + \partial_3\partial_1)\nn\,.
\end{align}
Since $\mf{C}^2$ preserves by construction the degree of a polynomial, we can simultaneously diagonalise $\mf{C}^2$ and $\mf{D}$, while preserving the invariance under the symmetric group $\mf{S}_3$. The $\mf{S}_3$ invariant eigenfunctions of $\mf{D}$ are symmetric homogeneous polynomials. We note that even though the operator $\mathfrak{C}^2$ is the Casimir of $\mf{sl}(2)$, the individual operators~\eqref{eq:diffops2} do not act on the space of homogeneous \textit{symmetric} polynomials even though they act on homogeneous polynomials. This can be seen for example already for linear polynomials
\begin{align}
h (t_1+t_2+t_3) = 4(t_2-t_1)\,,
\end{align} 
which is not symmetric. Therefore the common $\mf{S}_3$ invariant eigenspaces of $\mf{D}$ are not representations of $\mf{sl}(2)$. 

\subsection{Spectrum using Molien's theorem}

We will nevertheless be able to exploit the representation theory of $\mf{sl}(2)$ to characterise the spectrum of $\mf{C}^2$. The reason for this is that the quadratic Casimir $\mf{C}^2$ preserves the space of symmetric polynomials as it is symmetric itself. Its possible eigenvalues are the ones inherited from the action on arbitrary homogeneous polynomials (that are a representation of $\mf{sl}(2)$). Denoting the $\mf{sl}(2)$ representation of dimension $p+1$ by the standard Dynkin label $[p]$, one has that on the irreducible representation $[p]$ the Casimir has the eigenvalue $\mf{C}^2 = \frac14 p(p+2)=s(s-1)$ for $s=\frac{p}2+1$. 

The space of linear homogeneous polynomials is the three-dimensional representation $[2]$ with basis $\{t_i\,|\, i=1,2,3\}$. The eigenvalue of $\mf{L}^2$ on this space is $2$, corresponding to $s=2$. Similarly, the homogeneous polynomials of degree $k$ are in the representation $S^k([2])$ of $\mf{sl}(2)$, where $S^k$ denotes the $k$th symmetric tensor power. (The symmetry is simply due to the fact that the $t_i$ commute.) The representation of $\mf{sl}(2)$ tells us that
\begin{align}
\label{eq:Spec21}
S^k([2]) = [2k] \oplus [2k-4] \oplus \ldots  \oplus [0/2]\,,
\end{align}
where the last term is meant to indicate $[0]$ or $[2]$ depending on whether $k$ is even or odd. This means that the spectrum of eigenvalues of $\mathfrak{L}^2$ for degree $k$ are given by
\begin{align}
\label{eq:Spec22}
s(s-1)\quad\textrm{with}\quad s=k+1,k-1,\ldots,1/2\,,
\end{align}
corresponding to all possible (bosonic) representations of dimensions equal to $2s-1$. The last term $1$ or $2$ again depends on the parity of $k$.

Having established the possible eigenvalues of $\mf{L}^2$, a harder question is to fix the degeneracies/multiplicities. For this we need to find the number of $\mf{S}_3$ singlets in a given representation $[p]$ of $\mf{sl}(2)$. A similar mathematical problem arose in a different context in~\cite{Gaberdiel:2014cha}. We here employ a different method based on Molien series.

Molien's theorem gives the number of invariants of a finite group (like $\mf{S}_3$) of fixed degree $k$ acting in a finite-dimensional representation of the group. We note that the standard representation of $\mf{S}_3$ is two-dimensional. In terms of Schwinger parameters it can be represented as
\begin{align}
L_1 &\leftrightarrow L_2  & M &=\begin{pmatrix}0&1\\1&0\end{pmatrix}\,,\nn\\
L_2 &\leftrightarrow L_3  & M &=\begin{pmatrix}1&-1\\0&-1\end{pmatrix}\,,\\
L_1 &\leftrightarrow L_3  & M &=\begin{pmatrix}-1&0\\-1&1\end{pmatrix}\,.\nn
\end{align}
and this embeds in the two-dimensional representation $[1]$ of $PGL(2,\ints)\subset PGL(2,\reals)$. The representation theory of $\mf{sl}(2)$ then allows us to determine
\begin{align}
S^p([1]) = [p]
\end{align}
such that this symmetric tensor product yields only a single representation. Therefore, the $\mf{S}_3$ singlets in the representation $[p]$ of $\mf{sl}(2)$ is the same as the degree $p$ invariants in the standard representation of $\mf{S}_3$. Molien's theorem then directly gives the generating function of the number singlets $n_p^{\mf{S}_3}$ in the representation $[p]$ as the coefficient of $q^p$ in 
\begin{align}
\sum_{p=0}^\infty n_p^{\mf{S}_3} q^p = \frac{1}{(1-q^2)(1-q^3)} = 1+q^2 + q^3+ q^4 + \ldots\,.
\end{align}
One can deduce the following closed formula for $n_{2k}^{\mf{S}_3}$ from this Molien series by expanding the geometric series:
\begin{align}
n_{2k}^{\mf{S}_3} = \left\lfloor\frac{k+3}{3}\right\rfloor.
\end{align}
We have restricted to even $p=2k$ since these are the only values that arise in the spectrum of $\mf{C}^2$ in view of~\eqref{eq:Spec21}. 

Combining this with~\eqref{eq:Spec22}, we deduce that on symmetric homogeneous polynomials of degree $k$ one has the following spectrum for $\mf{C}^2$: The eigenvalue $s(s-1)$ with multiplicity $\lfloor(s+2)/3\rfloor$ for the values $s=k+1,k-1,\ldots,1/2$. This is in complete agreement with Theorem~1 of~\cite{DHoker:2015foa} but without the need to explicitly diagonalise the operator.

If one is interested in finding an explicit set of eigenfunctions of $\mf{C}^2$ for a given eigenvalue and degree $k$, one can work in arbitrary basis of homogeneous symmetric polynomials, e.g. Schur polynomials. Since the eigenspaces can be degenerate, one could introduce an addition operator that commutes with $\mf{C}^2$ and $\mf{D}$ and that resolves the multiplicity. This is the approach taken in~\cite{DHoker:2015foa}. Alternatively, one could just introduce a random labelling of the various eigenfunctions in a given eigenspace since finding such an operator is not always obvious. Implementing the explicit diagonalisation at low degrees $k$ is straight-forward to implement on a computer and has been treated in detail in~\cite{DHoker:2015foa}.

\section{Heuristic for the \texorpdfstring{$\mf{sl}(3)$}{sl(3)} Casimir and the tetrahedral graph}
\label{app:SL3C}

We can use a similar heuristic to show that the differential operator appearing in~\eqref{eq:LG6} is closely related to the quadratic Casimir of $SL(3,\reals)$. For this we consider a cubic scalar vacuum diagram with tetrahedral topology shown in figure~\ref{fig:3loopH}.

\begin{figure}
\def\xshift{7}
\centering
\begin{tikzpicture}[very thick,decoration={
    markings,
    mark=at position 0.5 with {\arrow{>}}}
    ] 
  \draw[postaction=decorate] (0,0)--(0,2);
  \draw[postaction=decorate] (0,0)--(1.73,-1);  
  \draw[postaction=decorate] (-1.73,-1)--(0,0);
  \draw[postaction=decorate] (-1.73,-1)--(0,2);
  \draw[postaction=decorate] (0,0)--(0,2);
  \draw[postaction=decorate] (1.73,-1)--(-1.73,-1);
  \draw[postaction=decorate] (0,2)--(1.73,-1);  
  \draw (-1.2,0.8) node {$p_1$};
  \draw (-0.3,0.7) node {$p_2$};  
  \draw (1.6,0.7) node {$p_1+p_2$};  
  \draw (0.7,-0.1) node {$p_3$};  
  \draw (0,-1.3) node {$p_1+p_2+p_3$};    
  \draw (-0.2,-0.6) node {$p_2+p_3$};    
   \draw[postaction=decorate] (0+\xshift,0)--(0+\xshift,2);
  \draw[postaction=decorate] (0+\xshift,0)--(1.73+\xshift,-1);  
  \draw[postaction=decorate] (-1.73+\xshift,-1)--(0+\xshift,0);
  \draw[postaction=decorate] (-1.73+\xshift,-1)--(0+\xshift,2);
  \draw[postaction=decorate] (0+\xshift,0)--(0+\xshift,2);
  \draw[postaction=decorate] (1.73+\xshift,-1)--(-1.73+\xshift,-1);
  \draw[postaction=decorate] (0+\xshift,2)--(1.73+\xshift,-1);  
  \draw (-1.2+\xshift,0.8) node {$t_1$};
  \draw (-0.3+\xshift,0.7) node {$t_2$};  
  \draw (1.3+\xshift,0.7) node {$t_3$};  
  \draw (0.7+\xshift,-0.1) node {$t_4$};  
  \draw (0+\xshift,-1.3) node {$t_5$};    
  \draw (-0.5+\xshift,-0.6) node {$t_6$};     
\end{tikzpicture}
\caption{\label{fig:3loopH}\sl The tetrahedral graph with labelling of momenta and parameters of the generating function.}
\end{figure}
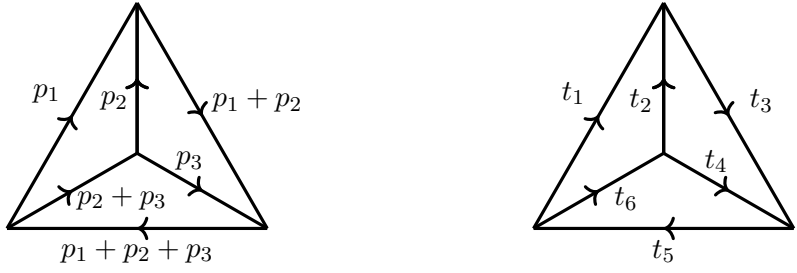

One obtains an expression similar to~\eqref{eq:2L} in terms of 
\begin{align}
\prod_{i=1}^6 \int_0^\infty dL_i (\det \Omega)^{-D/2} e^{\mathrm{Tr}\, \Omega T}
\end{align}
where now
\begin{align}
\Omega &= \begin{pmatrix}
L_1 +L_3 + L_5 & L_3+L_5 & L_5\\
L_3 + L_5 & L_2+L_3+L_5+L_6 & L_5+L_6\\
L_5 & L_5+L_6 &L_4+L_5+L_6
\end{pmatrix}\,,\nn\\
T &=\begin{pmatrix}
t_1 & \frac12(t_3-t_1-t_2) & \frac12(t_2-t_3-t_5-t_6)\\
\frac12(t_3-t_1-t_2) & t_2 & \frac12(t_6-t_2-t_4)\\
\frac12(t_2-t_3-t_5-t_6) & \frac12(t_6-t_2-t_4) & t_4
\end{pmatrix}\,.
\end{align}
The action is now by $SL(3,\mathbb{R})$. The quadratic Casimir in the $t_i$ variables becomes
\begin{align}
\label{eq:CasSL3}
\mathfrak{C}^2 &= \frac43 \mathfrak{D}^2 + 2 \mathfrak{D} +
\sum_{V_{ijk}} (t_i^2+t_j^2+t_k^2 - 2t_i t_j -2 t_j t_k -2t_k t_i) (\partial_i \partial_j +\partial_j \partial_k +\partial_k\partial_i)\nn\\
& \quad\quad\quad + \sum_{i=1}^3 ((t_{i+1} - t_{i+2} -t_{i+4} +t_{i+5})^2 -4 t_i t_{i+3}) \partial_i \partial_{i+3}\,.
\end{align}
The meaning of the various terms here is as follows. The scaling operator
\begin{align}
\mathfrak{D} = \sum_{i=1}^6 t_i \partial_i
\end{align}
measures the degree of homogeneous polynomials in the $t_i$. The first sum is over the four vertices $V_{ijk}$ of the tetrahedral graph, so $(ijk)\in\{ 123,\,135,\, 156,\,246\}$, and the sum contains all second derivatives of adjacent edges. The last term contains mixed second derivatives over opposite (non-adjacent) edges and there are three such pairs. If an index exceeds $6$ it is to be read modulo $6$. Up to $\mf{D}$ terms in~\eqref{eq:CasSL3}, we recognise the same differential operator as the one appearing in~\eqref{eq:LG6}.

\section{Graphical derivation of tetrahedral Laplace equation}
\label{app:LC6}

In order to evaluate the modular Laplacian on the function $C\!\!\raisebox{-0.3\height}{\six{s}{t}{p}{q}{r}{w}}$ we work with the deformation calculus of~\eqref{eq:Ldef}. This means that we have to distribute the deformation differentials $\delta_\mu$ and $\bar{\delta}_\mu$ on the lines of the tetrahedral diagram. In this appendix we draw the tetrahedral graph as a Mercedes diagram in order to unclutter some of the equations. The reference graph is
\begin{align}
\merc{}{}{}{}{}{}{}{}{}{}{}{}{}{}{}{}{}{} 
 \equiv \merc{s}{t}{p}{q}{r}{w}{}{}{}{}{}{}{}{}{}{}{}{} 
\end{align}
and in all following equations we will only put changes relative to the values $s$, $t$, $p$, $q$, $r$ and $w$ on the diagram.

There are a number of different possibilities when placing $\delta_\mu$ and $\bar\delta_\mu$ on the diagram. They can be placed either  $(i)$ on the same line,  $(ii)$ on adjacent lines or $(iii)$ on opposite lines.

Case $(i)$ is simplest:
\begin{align}
s(s-1) \merc{\delta_\mu\bar\delta_\mu}{}{}{}{}{}{}{}{}{}{}{}{}{}{}{}{}{} = s(s-1) \merc{}{}{}{}{}{}{}{}{}{}{}{}{}{}{}{}{}{}\,,
\end{align}
where we have included the relevant combinatorial factor for putting $\delta_\mu$ and $\bar\delta_\mu$ on two different of the $s$ many propagators of this line. Expanding first $\delta_\mu$ and $\bar\delta_\mu$ into additional propagators and partial world-sheet derivatives $\partial$ and $\bar{\partial}$ according to~\eqref{eq:deltaGDiag} and then contracting adjacent $\partial$ and $\bar{\partial}$ using~\eqref{eq:ddcontract} immediately gives back the original diagram. (The second term in the contraction rule~\eqref{eq:ddcontract} never contributes in the considerations of this appendix as it always gives tadpole diagrams that vanish thanks to~\eqref{eq:tadpole}.) Thus this part of the action of $\Delta=\delta_\mu\bar{\delta}_\mu$ contributes to the `eigenvalue' part of the differential equation. There are naturally similar terms for all the other five lines.

Case $(ii)$ is slightly more involved. There are $12$ pairs of adjacent lines (three per vertex) and they all have similar contributions. We consider the 
\begin{align}
st \left[ \merc{\delta_\mu}{\bar\delta_\mu}{}{}{}{}{}{}{}{}{}{}{}{}{}{}{}{} +c.c. \right]\,,
\end{align}
where we have noted that one always has to add the complex conjugate with $\delta_\mu$ and $\bar\delta_\mu$ interchanged. The diagram shown can be manipulated as follows\footnote{In this and the following equations we do not write out the factors of $\pi$ and $\tau_2$ as they cancel in the final expression.}
\begin{align}
\merc{\delta_\mu}{\bar\delta_\mu}{}{}{}{}{}{}{}{}{}{}{}{}{}{}{}{}  &= \merc{+1}{+1}{}{}{}{}{\textcolor{blue}{\pa}}{}{\pab}{}{}{}{\pa}{}{}{\pab}{}{}
=\merc{+1}{}{}{}{}{}{}{}{}{}{}{}{\pa}{}{}{\pab}{}{} -\merc{+1}{+1}{}{}{}{}{}{\pa}{\textcolor{blue}{\pab}}{}{}{}{\pa}{}{}{\pab}{}{}\nn\\
&= \merc{+1}{}{}{}{}{}{}{}{}{}{}{}{\pa}{}{}{\pab}{}{} -\merc{+1}{+1}{-1}{}{}{}{}{}{}{}{}{}{\pa}{}{}{\pab}{}{}+\merc{}{+1}{}{}{}{}{}{}{}{}{}{}{\pa}{}{}{\pab}{}{}
\end{align}
In this equation we have shown in blue in each step the derivatives that are integrated by parts at the trivalent vertices. At this point we can apply Lemma~\ref{lemma:adj} below to all three diagrams  to get
\begin{align}
\merc{\delta_\mu}{\bar\delta_\mu}{}{}{}{}{}{}{}{}{}{}{}{}{}{}{}{} +c.c. &=
 \merc{+1}{-1}{}{}{}{}{}{}{}{}{}{}{}{}{}{}{}{}
 +\merc{-1}{+1}{}{}{}{}{}{}{}{}{}{}{}{}{}{}{}{}
 +\merc{+1}{+1}{-2}{}{}{}{}{}{}{}{}{}{}{}{}{}{}{}\nn\\
&\quad\quad -2\merc{+1}{}{-1}{}{}{}{}{}{}{}{}{}{}{}{}{}{}{}
 -2\merc{}{+1}{-1}{}{}{}{}{}{}{}{}{}{}{}{}{}{}{}
\end{align}
This kind of manipulation is sufficient to find the Laplace equation for $C\!\!\raisebox{-0.3\height}{\three{s}{t}{p}}$ and concludes case $(ii)$.

For case $(iii)$ we have to put the differentials on opposite lines, for example
\begin{align}
sq \left[ \merc{\delta_\mu}{}{}{\bar\delta_\mu}{}{}{}{}{}{}{}{}{}{}{}{}{}{} +c.c. \right]\,.
\end{align}
We start manipulating the diagram with the aim of reducing it to diagrams with one holomorphic and one anti-holomorphic world-sheet derivatives. All such diagrams can be simplified using the lemmas below.
\begin{align}
\merc{\delta_\mu}{}{}{\bar\delta_\mu}{}{}{}{}{}{}{}{}{}{}{}{}{}{} &= \merc{+1}{}{}{+1}{}{}{\pa}{}{}{}{}{\pab}{\pa}{}{}{}{\textcolor{blue}{\pab}}{}
 = \merc{+1}{}{}{+1}{}{}{\textcolor{blue}{\pa}}{}{}{}{}{}{\pa}{}{}{\pab}{\pab}{} + \merc{+1}{}{}{+1}{}{}{\pa}{}{}{}{}{}{\textcolor{blue}{\pa}}{}{}{}{\pab}{\pab} \nn\\
&= -\merc{+1}{-1}{}{+1}{}{}{}{}{}{}{}{}{\pa}{}{}{}{\pab}{}    +\merc{+1}{}{}{+1}{}{}{\pa}{\pa}{}{}{}{}{}{}{}{\textcolor{blue}{\pab}}{\pab}{} 
     -\merc{+1}{}{}{+1}{}{-1}{\pa}{}{}{}{}{}{}{}{}{}{\pab}{} + \merc{+1}{}{}{+1}{}{}{}{}{}{}{}{}{\pa}{}{\pa}{}{\pab}{\textcolor{blue}{\pab}}
\end{align}
The two terms with the minus sign can be treated with Lemma~\ref{lemma:opp} below. The other two terms both have vertices with three derivative sitting on them after moving the blue ones to the other end of the line. Integrating by parts then the derivatives in blue reduces them to terms to which one can also apply the lemmas below and one requires both. Writing out all terms gives some cancellations and the total for case $(iii)$ becomes finally:
\begin{align}
\merc{\delta_\mu}{}{}{\bar\delta_\mu}{}{}{}{}{}{}{}{}{}{}{}{}{}{}\!\!\! +c.c. &= - 2 \merc{}{}{}{}{}{}{}{}{}{}{}{}{}{}{}{}{}{} 
   +2 \merc{+1}{}{-1}{+1}{}{-1}{}{}{}{}{}{}{}{}{}{}{}{}  +2 \merc{+1}{-1}{}{+1}{-1}{}{}{}{}{}{}{}{}{}{}{}{}{} \nn\\
&\quad +\merc{+1}{-2}{}{+1}{}{}{}{}{}{}{}{}{}{}{}{}{}{} +\merc{+1}{}{-2}{+1}{}{}{}{}{}{}{}{}{}{}{}{}{}{}  +\merc{+1}{}{}{+1}{-2}{}{}{}{}{}{}{}{}{}{}{}{}{}  +\merc{+1}{}{}{+1}{}{-2}{}{}{}{}{}{}{}{}{}{}{}{}\nn\\
&\quad -2 \merc{+1}{}{}{+1}{-1}{-1}{}{}{}{}{}{}{}{}{}{}{}{} -2 \merc{+1}{-1}{}{+1}{}{-1}{}{}{}{}{}{}{}{}{}{}{}{} -2 \merc{+1}{}{-1}{+1}{-1}{}{}{}{}{}{}{}{}{}{}{}{}{} -2 \merc{+1}{-1}{-1}{+1}{}{}{}{}{}{}{}{}{}{}{}{}{}{}\!.
\end{align}
One sees that there is a contribution to the eigenvalue from case $(iii)$; the remaining terms have been grouped according to whether they use opposite or adjacent lines in addition to the lines with the differentials. This concludes case $(iii)$.

Putting all the cases together gives~\eqref{eq:LC6}.

\subsection{Two lemmas on first derivative graphs}

We give two simple lemmas for tetrahedral diagrams that have one derivative $\pa$ and one derivative $\pab$ on them.

\begin{lemma}[Adjacent lines lemma]
\label{lemma:adj}
If the derivatives are on adjacent lines one has
\begin{align}
\merc{}{}{}{}{}{}{\pa}{}{\pab}{}{}{}{}{}{}{}{}{}  +c.c. = 
\merc{-1}{}{}{}{}{}{}{}{}{}{}{}{}{}{}{}{}{}  + \merc{}{-1}{}{}{}{}{}{}{}{}{}{}{}{}{}{}{}{} -\merc{}{}{-1}{}{}{}{}{}{}{}{}{}{}{}{}{}{}{} \,,
\end{align}
such that the lines with the derivatives appear with the same sign and the last line at the vertex with an opposite sign.
\end{lemma}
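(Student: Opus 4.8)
The plan is to start from the left-hand side, where a holomorphic derivative $\pa$ sits on one edge and an antiholomorphic derivative $\pab$ sits on an adjacent edge, the two edges meeting at a common trivalent vertex $V$. Since the two derivatives are attached to Green's functions emanating from (or terminating at) the \emph{same} vertex $V$, the first step is to use translation invariance $\partial_{z_i}G(z_i-z_j|\tau)=-\partial_{z_j}G(z_i-z_j|\tau)$ (the graphical move exhibited after \eqref{eq:deltaGDiag}) together with the integration-by-parts rule to slide \emph{both} derivatives onto the lines meeting at $V$ in such a way that they both act at $V$. Concretely, for a trivalent vertex with three incident edges carrying powers $a$, $b$, $c$, one has the identity $\pa_V\bigl(G_1 G_2 G_3\bigr)=0$ after integrating over $z_V$, which expresses that the total holomorphic derivative at $V$ vanishes; applying this lets one trade a derivative on one edge for a combination of derivatives on the other two edges incident to $V$.

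The key combinatorial step is then the following. Place $\pa$ at $V$ acting along edge $1$ (power $s$) and $\pab$ at $V$ acting along edge $2$ (power $t$); the third edge at $V$ has power $p$. Using $\pa_V(G_1^s G_2^t G_3^p)=0$ inside the $z_V$ integral, I move the $\pa$ off edge $1$ and distribute it: it can land on edge $2$ or edge $3$. When $\pa$ lands on the edge $2$ factor that already carries $\pab$, the rule \eqref{eq:ddcontract} applies — the product $\pa\pab$ on a single propagator contracts to $\frac{\pi}{\tau_2}$ times a diagram with that propagator removed (the vertex $V$ then absorbed), plus a term that is a tadpole and vanishes by \eqref{eq:tadpole}. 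This is precisely the mechanism that produces the three terms on the right-hand side: the contraction on edge $1$'s factor gives $\six{-1}{}{}{}{}{}$ (lowering the power on edge $1$ by one), the analogous contraction on edge $2$'s factor gives $\six{}{-1}{}{}{}{}$, and the ``leftover'' piece where the derivative is routed through edge $3$ produces $-\six{}{}{-1}{}{}{}$ with the opposite sign coming from the integration-by-parts minus sign. Adding the complex conjugate (which interchanges $\pa\leftrightarrow\pab$) symmetrizes the first two contributions and reproduces exactly the stated right-hand side, with the two ``derivative lines'' appearing with $+$ and the third edge at the vertex appearing with $-$.

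The main obstacle I expect is bookkeeping of signs and combinatorial multiplicities: each integration by parts contributes a minus sign, each propagator carries a multiplicity equal to its power when one decides which of the (say) $s$ parallel propagators the derivative acts on, and one must check that the tadpole-type and second-type-contraction terms from \eqref{eq:ddcontract} genuinely drop out at every stage. A clean way to keep this under control is to work with a single representative propagator on each edge and reinstate the powers $s,t,p$ only at the end, and to track the ``blue'' derivative being integrated by parts exactly as in the worked examples earlier in this appendix. Once the routing of the single derivative pair through the vertex $V$ is carried out carefully, the identity follows; the factors of $\pi$ and $\tau_2$ cancel as noted, so no further normalization check is needed.
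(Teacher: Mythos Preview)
Your proposal has the right ingredients (integration by parts at the trivalent vertex, followed by the contraction $\pa\pab G \to -\pi\delta + \pi/\tau_2$), but the mechanism you describe for producing the three terms on the right-hand side is not correct. A \emph{single} integration by parts of $\pa$ off edge~$1$ does not yield contractions on all three edges. Concretely, writing the integrand at the vertex $V$ as $\pa G_1\cdot \pab G_2\cdot G_3$ and integrating $\pa$ by parts gives
\[
-\int G_1\,\pa\bigl[\pab G_2\cdot G_3\bigr]
= -\int G_1\bigl[\pa\pab G_2\cdot G_3 + \pab G_2\cdot \pa G_3\bigr],
\]
i.e.\ exactly \emph{one} contracted term (the $(t{-}1)$ diagram, from $\pa\pab G_2$) together with a leftover diagram carrying $\pab$ on edge~$2$ and $\pa$ on edge~$3$. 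There is no ``contraction on edge~$1$'s factor'' at this stage, and routing $\pa$ onto edge~$3$ does not lower the power on edge~$3$; it just leaves a new two-derivative diagram. Adding the complex conjugate of this single step gives $2(t{-}1)$ minus the analogous adjacent-lines expression for edges~$2$ and~$3$, so you would be invoking the lemma for a different pair of edges --- a circular argument.

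The paper closes this gap by iterating the integration by parts \emph{three} times, cycling the single remaining derivative around the vertex: $(\pa@1,\pab@2)\to (s{-}1)-(\pa@1,\pab@3)\to (s{-}1)-(p{-}1)+(\pa@2,\pab@3)\to (s{-}1)-(p{-}1)+(t{-}1)-(\pab@1,\pa@2)$. The last leftover is precisely the complex conjugate of the starting diagram, so moving it to the left-hand side yields the stated identity. Each step produces exactly one $\pm 1$ shift, and the sign on the $p$-edge is opposite because it is picked up on the second (even) step. Your acknowledgement that sign bookkeeping is the main obstacle is well placed; the fix is simply to perform this three-fold cycle rather than expecting all three terms from one move.
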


\begin{proof}
The proof is by direct calculation:
\begin{align}
\merc{}{}{}{}{}{}{\pa}{}{\pab}{}{}{}{}{}{}{}{}{} &= 
  \merc{-1}{}{}{}{}{}{}{}{}{}{}{}{}{}{}{}{}{} - \merc{}{}{}{}{}{}{\pa}{\pab}{}{}{}{}{}{}{}{}{}{}\nn\\  
&= \merc{-1}{}{}{}{}{}{}{}{}{}{}{}{}{}{}{}{}{} -\merc{}{}{-1}{}{}{}{}{}{}{}{}{}{}{}{}{}{}{}  + \merc{}{}{}{}{}{}{}{\pab}{\pa}{}{}{}{}{}{}{}{}{}\nn\\  
&= \merc{-1}{}{}{}{}{}{}{}{}{}{}{}{}{}{}{}{}{} -\merc{}{}{-1}{}{}{}{}{}{}{}{}{}{}{}{}{}{}{}  + \merc{}{-1}{}{}{}{}{}{}{}{}{}{}{}{}{}{}{}{} - \merc{}{}{}{}{}{}{\pab}{}{\pa}{}{}{}{}{}{}{}{}{}\,.
\end{align}
As the last term in the last line is the complex conjugate of the original diagram, the assertion follows.
\end{proof}

\begin{lemma}[Opposite lines lemma]
\label{lemma:opp}
If the derivatives are on opposite (non-adjacent) lines one has
\begin{align}
\merc{}{}{}{}{}{}{}{\pab}{}{}{}{}{}{\pa}{}{}{}{}  +c.c. = 
\merc{-1}{}{}{}{}{}{}{}{}{}{}{}{}{}{}{}{}{}  + \merc{}{}{}{-1}{}{}{}{}{}{}{}{}{}{}{}{}{}{} -\merc{}{-1}{}{}{}{}{}{}{}{}{}{}{}{}{}{}{}{} -\merc{}{}{}{}{-1}{}{}{}{}{}{}{}{}{}{}{}{}{}  \,,
\end{align}
such that all lines without derivatives are affected; opposite lines appear with the same sign.
\end{lemma}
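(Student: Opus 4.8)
The plan is to prove this by a direct diagrammatic computation, entirely parallel to the proof of Lemma~\ref{lemma:adj}. The only tools needed are the ones already used throughout Appendix~\ref{app:LC6}: integration by parts at a trivalent vertex (i.e.\ $\int_\Sigma d^2z_V\,\partial_{z_V}(G_aG_bG_c)=0$, which rewrites a derivative on one edge at $V$ as minus the sum of that derivative on the other two edges at $V$, cf.~\eqref{eq:G_id} and the rule displayed after~\eqref{eq:ddcontract}); translation invariance of the propagator, which slides a derivative from one end of an edge to the other at the cost of a sign; and the contraction identity~\eqref{eq:ddcontract}, whose second, disconnecting term always produces a tadpole and therefore vanishes by~\eqref{eq:tadpole}. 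The genuinely new feature compared with Lemma~\ref{lemma:adj} is that the $\partial$ and $\overline{\partial}$ now sit on two \emph{opposite} edges of the tetrahedron, which share no trivalent vertex. Write $V$ for the vertex carrying $\overline{\partial}$, $V'$ for the vertex carrying $\partial$, and $e$ for the unique edge joining $V$ to $V'$ (in the reference labelling of the statement, with $\overline{\partial}$ on the edge labelled $p$ and $\partial$ on the edge labelled $w$, one has $e=s$, its opposite edge is $q$, the third edge at $V$ is $t$, and the third edge at $V'$ is $r$). The first step I would take is to integrate by parts at $V$, which spreads $\overline{\partial}$ off its edge onto the two remaining edges at $V$, namely $e$ and the third edge of $V$; this produces one overall sign and splits the computation into two pieces, each of which still carries $\partial$ on the original edge at $V'$.

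In the first piece $\overline{\partial}$ now lives on the connecting edge $e$. I would use translation invariance to slide it to the $V'$-end of $e$ (a second sign), after which $\partial$ and $\overline{\partial}$ are on the two adjacent edges meeting at $V'$; Lemma~\ref{lemma:adj} then applies at $V'$ — the relabelling being harmless since the $\mf{S}_4$ symmetry of the tetrahedron acts on every adjacent pair in the same way — and converts this piece into three boundary terms, with a unit downward shift on $e$, on the $\partial$-edge, and on the third edge at $V'$. In the second piece $\overline{\partial}$ lives on the third edge of $V$, which becomes adjacent to the $\partial$-edge at their unique common vertex; sliding both derivatives to that vertex (two more signs) and invoking Lemma~\ref{lemma:adj} once more yields three further boundary terms, shifted on the third edge of $V$, on the $\partial$-edge, and on the third edge of that common vertex. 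One then adds the complex conjugate of the whole calculation from the outset (Lemma~\ref{lemma:adj} as used already includes its own $+\,\mathrm{c.c.}$). The key observation is that the two pieces each produce a shift on the edge that originally carried $\partial$, and these two contributions enter with opposite signs and \emph{cancel}; after this cancellation exactly four terms survive, living on the four edges carrying no derivative — the connector $e$ and its opposite with one sign, the two third edges with the opposite sign — which is the claimed identity. Alternatively one can run the from-scratch manipulation of Lemma~\ref{lemma:adj}, alternating integrations by parts at the vertices with uses of~\eqref{eq:ddcontract} until the complex conjugate of the starting diagram reappears and one solves for it; the reduction via Lemma~\ref{lemma:adj} above is just a repackaging of that.

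I expect the only real difficulty to be bookkeeping: keeping track of the sign generated by each integration by parts and each translation, and of precisely which edge loses a unit of power at each use of Lemma~\ref{lemma:adj}/\eqref{eq:ddcontract}, and in particular verifying the cancellation of the two shifts on the $\partial$-edge, which is what makes the statement read ``all lines without derivatives are affected'' and what fixes the $+,+,-,-$ sign pattern (the two edges with the same sign being opposite to each other). As consistency checks I would verify that the resulting identity is compatible with the $\mf{S}_4$ symmetry of the tetrahedron — any element of the stabiliser of the pair of derivative edges maps this instance of the lemma to another instance with the roles of ``connector'' and ``third edge'' permuted, preserving the sign pattern — and that, fed with the combinatorial prefactors of case~$(iii)$ in Appendix~\ref{app:LC6}, this lemma together with Lemma~\ref{lemma:adj} reproduces the ``opposite pairs of lines'' contributions to~\eqref{eq:LC6}.
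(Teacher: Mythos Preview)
Your proposal is correct and follows essentially the same route as the paper. The paper integrates by parts the $\partial$ at its vertex (spreading it onto the two other edges there), obtains two adjacent-line configurations, applies Lemma~\ref{lemma:adj} to each, and observes that two of the six resulting terms cancel. You do the mirror-image computation, integrating by parts the $\overline{\partial}$ instead; your two pieces are the analogues of the paper's two diagrams, and your cancellation on the $\partial$-edge is the counterpart of the paper's cancellation on the $\overline{\partial}$-edge. The extra bookkeeping you spell out (sliding derivatives to a common vertex before invoking Lemma~\ref{lemma:adj}, tracking the three minus signs in Piece~2) is implicit in the paper's one-line reduction but is exactly what is needed to make the signs come out as stated.
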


\begin{proof}
We calculate
\begin{align}
\merc{}{}{}{}{}{}{}{\pab}{}{}{}{}{}{\pa}{}{}{}{}  +c.c. = \merc{}{}{}{}{}{}{\pa}{\pab}{}{}{}{}{}{}{}{}{}{} + \merc{}{}{}{}{}{}{}{\pab}{}{}{\pa}{}{}{}{}{}{}{}  + c.c.
\end{align}
Both diagrams are now such that the derivatives are on adjacent lines and one can apply Lemma~\ref{lemma:adj} to each of the two diagrams, leading to six diagrams out of which two cancel. The remaining four are the asserted ones.
\end{proof}

%
%



\begin{thebibliography}{70}
  
\bibitem{Hull:1994ys}
  C.~M.~Hull and P.~K.~Townsend,
  ``Unity of superstring dualities,''
  \doi{Nucl.\ Phys.\ B {\bf 438} (1995) 109}{doi:10.1016/0550-3213(94)00559-W}
  \eprint{hep-th/9410167}.

\bibitem{Sen:1994fa}
  A.~Sen,
  ``Strong - weak coupling duality in four-dimensional string theory,''
  \doi{Int.\ J.\ Mod.\ Phys.\ A {\bf 9} (1994) 3707}{doi:10.1142/S0217751X94001497}
  \eprint{hep-th/9402002}.

\bibitem{Green:1997tv}
  M.~B.~Green and M.~Gutperle,
  ``Effects of D instantons,''
  \doi{Nucl.\ Phys.\ B {\bf 498} (1997) 195}{doi:10.1016/S0550-3213(97)00269-1}
  \eprint{hep-th/9701093}.

\bibitem{Green:1997as}
  M.~B.~Green, M.~Gutperle and P.~Vanhove,
  ``One loop in eleven-dimensions,''
  \doi{Phys.\ Lett.\ B {\bf 409} (1997) 177}{doi:10.1016/S0370-2693(97)00931-3}
  \eprint{hep-th/9706175}.

\bibitem{Kiritsis:1997em}
  E.~Kiritsis and B.~Pioline,
  ``On $R^4$ threshold corrections in IIb string theory and (p, q) string instantons,''
  \doi{Nucl.\ Phys.\ B {\bf 508} (1997) 509}{doi:10.1016/S0550-3213(97)00645-7}
  \eprint{hep-th/9707018}.

\bibitem{Pioline:1998mn}
  B.~Pioline,
  ``A Note on nonperturbative $R^4$ couplings,''
  \doi{Phys.\ Lett.\ B {\bf 431} (1998) 73}{doi:10.1016/S0370-2693(98)00554-1}
  \eprint{hep-th/9804023}.

\bibitem{Green:1998by}
  M.~B.~Green and S.~Sethi,
  ``Supersymmetry constraints on type IIB supergravity,''
  \doi{Phys.\ Rev.\ D {\bf 59} (1999) 046006}{doi:10.1103/PhysRevD.59.046006}
  \eprint{hep-th/9808061}.

\bibitem{Obers:1999um}
  N.~A.~Obers and B.~Pioline,
  ``Eisenstein series and string thresholds,''
  \doi{Commun.\ Math.\ Phys.\  {\bf 209} (2000) 275}{doi:10.1007/s002200050022}
  \eprint{hep-th/9903113}.

\bibitem{Green:1999pu}
  M.~B.~Green, H.~h.~Kwon and P.~Vanhove,
  ``Two loops in eleven-dimensions,''
  \doi{Phys.\ Rev.\ D {\bf 61} (2000) 104010}{doi:10.1103/PhysRevD.61.104010}
  \eprint{hep-th/9910055}.

\bibitem{Green:2005ba}
  M.~B.~Green and P.~Vanhove,
  ``Duality and higher derivative terms in M theory,''
  \doi{JHEP {\bf 0601} (2006) 093}{doi:10.1088/1126-6708/2006/01/093}
  \eprint{hep-th/0510027}.

\bibitem{Basu:2007ru}
  A.~Basu,
  ``The $D^4R^4$ term in type IIB string theory on $T^2$ and U-duality,''
  \doi{Phys.\ Rev.\ D {\bf 77} (2008) 106003}{doi:10.1103/PhysRevD.77.106003}
  \eprintN{0708.2950}.

\bibitem{Basu:2007ck}
  A.~Basu,
  ``The $D^6R^4$ term in type IIB string theory on $T^2$ and U-duality,''
  \doi{Phys.\ Rev.\ D {\bf 77} (2008) 106004}{doi:10.1103/PhysRevD.77.106004}
  \eprintN{0712.1252}.

\bibitem{Green:2010wi}
  M.~B.~Green, J.~G.~Russo and P.~Vanhove,
  ``Automorphic properties of low energy string amplitudes in various dimensions,''
  \doi{Phys.\ Rev.\ D {\bf 81} (2010) 086008}{doi:10.1103/PhysRevD.81.086008}
  \eprintN{1001.2535}.
  
\bibitem{Pioline:2010kb}
  B.~Pioline,
  ``$R^4$ couplings and automorphic unipotent representations,''
  \doi{JHEP {\bf 1003} (2010) 116}{doi:10.1007/JHEP03(2010)116}
  \eprintN{1001.3647}.

\bibitem{Green:2010kv}
  M.~B.~Green, S.~D.~Miller, J.~G.~Russo and P.~Vanhove,
  ``Eisenstein series for higher-rank groups and string theory amplitudes,''
  \doi{Commun.\ Num.\ Theor.\ Phys.\  {\bf 4} (2010) 551}{doi:10.4310/CNTP.2010.v4.n3.a2}
  \eprintN{1004.0163}.

\bibitem{Basu:2011he}
  A.~Basu,
  ``Supersymmetry constraints on the $R^4$ multiplet in type IIB on $T^2$,''
  \doi{Class.\ Quant.\ Grav.\  {\bf 28} (2011) 225018}{doi:10.1088/0264-9381/28/22/225018}
  \eprintN{1107.3353}.

\bibitem{Green:2011vz}
  M.~B.~Green, S.~D.~Miller and P.~Vanhove,
  ``Small representations, string instantons, and Fourier modes of Eisenstein series,''
  \doi{J.\ Number Theor.\  {\bf 146} (2015) 187}{doi:10.1016/j.jnt.2013.05.018}
  \eprintN{1111.2983}.

\bibitem{Fleig:2012xa}
  P.~Fleig and A.~Kleinschmidt,
  ``Eisenstein series for infinite-dimensional U-duality groups,''
  \doi{JHEP {\bf 1206} (2012) 054}{doi:10.1007/JHEP06(2012)054}
  \eprintN{1204.3043}.

\bibitem{Green:2014yxa}
  M.~B.~Green, S.~D.~Miller and P.~Vanhove,
  ``$SL(2, \mathbb{Z})$-invariance and D-instanton contributions to the $D^6 R^4$ interaction,''
  \doi{Commun.\ Num.\ Theor.\ Phys.\  {\bf 09} (2015) 307}{doi:10.4310/CNTP.2015.v9.n2.a3}
  \eprintN{1404.2192}.

\bibitem{Bossard:2014lra}
  G.~Bossard and V.~Verschinin,
  ``Minimal unitary representations from supersymmetry,''
  \doi{JHEP {\bf 1410} (2014) 008}{doi:10.1007/JHEP10(2014)008}
  \eprintN{1406.5527}.

\bibitem{Basu:2014hsa}
  A.~Basu,
  ``The $D^{6}R^{4}$ term from three loop maximal supergravity,''
  \doi{Class.\ Quant.\ Grav.\  {\bf 31} (2014) no.24,  245002}{doi:10.1088/0264-9381/31/24/245002}
  \eprintN{1407.0535}.

\bibitem{Bossard:2014aea}
  G.~Bossard and V.~Verschinin,
  ``$\mathcal{E} \nabla^4 R^4$ type invariants and their gradient expansion,''
  \doi{JHEP {\bf 1503} (2015) 089}{doi:10.1007/JHEP03(2015)089}
  \eprintN{1411.3373}.

\bibitem{Pioline:2015yea}
  B.~Pioline,
  ``$D^{6}R^{4}$ amplitudes in various dimensions,''
  \doi{JHEP {\bf 1504} (2015) 057}{doi:10.1007/JHEP04(2015)057}
  \eprintN{1502.03377}

\bibitem{Bossard:2015uga}
  G.~Bossard and V.~Verschinin,
  ``The two $\nabla^{6} R^{4}$ type invariants and their higher order generalisation,''
  \doi{JHEP {\bf 1507} (2015) 154}{doi:10.1007/JHEP07(2015)154}
  \eprintN{1503.04230}.

\bibitem{Fleig:2015vky}
  P.~Fleig, H.~P.~A.~Gustafsson, A.~Kleinschmidt and D.~Persson,
  ``Eisenstein series and automorphic representations,''
  \href{http://arxiv.org/abs/1511.04265}{\texttt{1511.04265 [math.NT]}}.
    
\bibitem{Berkovits:2004px}
  N.~Berkovits,
  ``Multiloop amplitudes and vanishing theorems using the pure spinor formalism for the superstring,''
  \doi{JHEP {\bf 0409} (2004) 047}{doi:10.1088/1126-6708/2004/09/047}
  \eprint{hep-th/0406055}.
  
\bibitem{Berkovits:2009aw}
  N.~Berkovits, M.~B.~Green, J.~G.~Russo and P.~Vanhove,
  ``Non-renormalization conditions for four-gluon scattering in supersymmetric string and field theory,''
  \doi{JHEP {\bf 0911} (2009) 063}{doi:10.1088/1126-6708/2009/11/063}
  \eprintN{0908.1923}.
  
\bibitem{Gomez:2013sla}
  H.~Gomez and C.~R.~Mafra,
  ``The closed-string 3-loop amplitude and S-duality,''
  \doi{JHEP {\bf 1310} (2013) 217}{doi:10.1007/JHEP10(2013)217}
  \eprintN{1308.6567}.

\bibitem{DHoker:2013eea}
  E.~D'Hoker and M.~B.~Green,
  ``Zhang-Kawazumi Invariants and Superstring Amplitudes,''
  Journal of Number Theory, Vol 144 (2014) page 111
  \eprintN{1308.4597}.

\bibitem{DHoker:2014gfa}
  E.~D'Hoker, M.~B.~Green, B.~Pioline and R.~Russo,
  ``Matching the $D^{6}R^{4}$ interaction at two-loops,''
  \doi{JHEP {\bf 1501} (2015) 031}{doi:10.1007/JHEP01(2015)031}
  \eprintN{1405.6226}.

\bibitem{Brown:2009qja}
  F.~C.~S.~Brown,
  ``Multiple zeta values and periods of moduli spaces $M_{0 ,n}(R)$,''
  Annales Sci.\ Ecole Norm.\ Sup.\  {\bf 42} (2009) 371
  \eprint{math/0606419 [math.AG]}.

\bibitem{Stieberger:2006bh}
  S.~Stieberger and T.~R.~Taylor,
  ``Amplitude for N-Gluon Superstring Scattering,''
  \doi{Phys.\ Rev.\ Lett.\  {\bf 97} (2006) 211601}{doi:10.1103/PhysRevLett.97.211601}
  \eprint{hep-th/0607184}.

\bibitem{Stieberger:2006te}
  S.~Stieberger and T.~R.~Taylor,
  ``Multi-Gluon Scattering in Open Superstring Theory,''
  \doi{Phys.\ Rev.\ D {\bf 74} (2006) 126007}{doi:10.1103/PhysRevD.74.126007}
  \eprint{hep-th/0609175}.

\bibitem{Stieberger:2009rr}
  S.~Stieberger,
  ``Constraints on Tree-Level Higher Order Gravitational Couplings in Superstring Theory,''
  \doi{Phys.\ Rev.\ Lett.\  {\bf 106} (2011) 111601}{doi:10.1103/PhysRevLett.106.111601}
  \eprintN{0910.0180}.

\bibitem{Mafra:2011nv}
  C.~R.~Mafra, O.~Schlotterer and S.~Stieberger,
  ``Complete N-Point Superstring Disk Amplitude I. Pure Spinor Computation,''
  \doi{Nucl.\ Phys.\ B {\bf 873} (2013) 419}{doi:10.1016/j.nuclphysb.2013.04.023}
  \eprintN{1106.2645}.

\bibitem{Mafra:2011nw}
  C.~R.~Mafra, O.~Schlotterer and S.~Stieberger,
  ``Complete N-Point Superstring Disk Amplitude II. Amplitude and Hypergeometric Function Structure,''
  \doi{Nucl.\ Phys.\ B {\bf 873} (2013) 461}{doi:10.1016/j.nuclphysb.2013.04.022}
  \eprintN{1106.2646}.

\bibitem{Schlotterer:2012ny}
  O.~Schlotterer and S.~Stieberger,
  ``Motivic Multiple Zeta Values and Superstring Amplitudes,''
  \doi{J.\ Phys.\ A {\bf 46} (2013) 475401}{doi:10.1088/1751-8113/46/47/475401}
  \eprintN{1205.1516}.

\bibitem{Broedel:2013tta}
  J.~Broedel, O.~Schlotterer and S.~Stieberger,
  ``Polylogarithms, Multiple Zeta Values and Superstring Amplitudes,''
  \doi{Fortsch.\ Phys.\  {\bf 61} (2013) 812}{doi:10.1002/prop.201300019}
  \eprintN{1304.7267}.

\bibitem{Broedel:2013aza}
  J.~Broedel, O.~Schlotterer, S.~Stieberger and T.~Terasoma,
  ``All order $\alpha^{\prime}$-expansion of superstring trees from the Drinfeld associator,''
  \doi{Phys.\ Rev.\ D {\bf 89} (2014) no.6,  066014}{doi:10.1103/PhysRevD.89.066014}
  \eprintN{1304.7304}.

\bibitem{Brown:2014}
  F. Brown, 
  ``Single-valued Motivic Periods and Multiple Zeta Values,'' 
  \doi{SIGMA 2 (2014) e25}{doi:10.1017/fms.2014.18}
  \eprintN{1309.5309}.

\bibitem{Stieberger:2013wea}
  S.~Stieberger,
  ``Closed superstring amplitudes, single-valued multiple zeta values and the Deligne associator,''
  \doi{J.\ Phys.\ A {\bf 47} (2014) 155401}{doi:10.1088/1751-8113/47/15/155401}
  \eprintN{1310.3259}.
  
\bibitem{Stieberger:2014hba}
  S.~Stieberger and T.~R.~Taylor,
  ``Closed String Amplitudes as Single-Valued Open String Amplitudes,''
  \doi{Nucl.\ Phys.\ B {\bf 881} (2014) 269}{doi:10.1016/j.nuclphysb.2014.02.005}
  \eprintN{1401.1218}.
  
\bibitem{Mafra:2016mcc}
  C.~R.~Mafra and O.~Schlotterer,
  ``Non-abelian $Z$-theory: Berends-Giele recursion for the $\alpha'$-expansion of disk integrals,''
  \doi{JHEP {\bf 1701} (2017) 031}{doi:10.1007/JHEP01(2017)031}
  \eprintN{1609.07078}.

\bibitem{Mafra:2012kh}
  C.~R.~Mafra and O.~Schlotterer,
  ``The Structure of n-Point One-Loop Open Superstring Amplitudes,''
  \doi{JHEP {\bf 1408} (2014) 099}{doi:10.1007/JHEP08(2014)099}
  \eprintN{1203.6215}.
  
\bibitem{Broedel:2014vla}
  J.~Broedel, C.~R.~Mafra, N.~Matthes and O.~Schlotterer,
  ``Elliptic multiple zeta values and one-loop superstring amplitudes,''
  \doi{JHEP {\bf 1507} (2015) 112}{doi:10.1007/JHEP07(2015)112}
  \eprintN{1412.5535}.
   
\bibitem{Broedel:2015hia}
  J.~Broedel, N.~Matthes and O.~Schlotterer,
  ``Relations between elliptic multiple zeta values and a special derivation algebra,''
  \doi{J.\ Phys.\ A {\bf 49} (2016) no.15,  155203}{doi:10.1088/1751-8113/49/15/155203}
  \eprintN{1507.02254}.

\bibitem{Zerbini:2015rss}
  F.~Zerbini,
  ``Single-valued multiple zeta values in genus 1 superstring amplitudes,''
  \eprintN{1512.05689}.

\bibitem{DHoker:2015qmf}
  E.~D'Hoker, M.~B.~Green, O.~Gurdogan and P.~Vanhove,
  ``Modular Graph Functions,''
  \eprintN{1512.06779}.

\bibitem{Schnetz:2013hqa}
  O.~Schnetz,
  ``Graphical functions and single-valued multiple polylogarithms,''
  \doi{Commun.\ Num.\ Theor.\ Phys.\  {\bf 08} (2014) 589}{doi:10.4310/CNTP.2014.v8.n4.a1}
  \eprintN{1302.6445}.

\bibitem{Moch:2001zr}
  S.~Moch, P.~Uwer and S.~Weinzierl,
  ``Nested sums, expansion of transcendental functions and multiscale multiloop integrals,''
  \doi{J.\ Math.\ Phys.\  {\bf 43} (2002) 3363}{doi:10.1063/1.1471366}
  \eprint{hep-ph/0110083}.

\bibitem{Bloch:2013tra}
  S.~Bloch and P.~Vanhove,
  ``The elliptic dilogarithm for the sunset graph,''
  \doi{J.\ Number Theor.\  {\bf 148} (2015) 328}{doi:10.1016/j.jnt.2014.09.032}
  \eprintN{1309.5865}.
  
\bibitem{Adams:2015ydq}
  L.~Adams, C.~Bogner and S.~Weinzierl,
  ``The iterated structure of the all-order result for the two-loop sunrise integral,''
  \doi{J.\ Math.\ Phys.\  {\bf 57} (2016) no.3,  032304}{doi:10.1063/1.4944722}
  \eprintN{1512.05630}.
  
\bibitem{Angelantonj:2012gw}
  C.~Angelantonj, I.~Florakis and B.~Pioline,
  ``One-Loop BPS amplitudes as BPS-state sums,''
  \doi{JHEP {\bf 1206} (2012) 070}{doi:10.1007/JHEP06(2012)070}
  \eprintN{1203.0566}.
  
\bibitem{Green:1999pv}
  M.~B.~Green and P.~Vanhove,
  ``The Low-energy expansion of the one loop type II superstring amplitude,''
  \doi{Phys.\ Rev.\ D {\bf 61} (2000) 104011}{doi:10.1103/PhysRevD.61.104011}
  \eprint{hep-th/9910056}.
 
\bibitem{Green:2008uj}
  M.~B.~Green, J.~G.~Russo and P.~Vanhove,
  ``Low energy expansion of the four-particle genus-one amplitude in type II superstring theory,''
  \doi{JHEP {\bf 0802} (2008) 020}{doi:10.1088/1126-6708/2008/02/020}
  \eprintN{0801.0322}

\bibitem{DHoker:2015foa}
  E.~D'Hoker, M.~B.~Green and P.~Vanhove,
  ``On the modular structure of the genus-one Type II superstring low energy expansion,''
  \doi{JHEP {\bf 1508} (2015) 041}{doi:10.1007/JHEP08(2015)041}
  \eprintN{1502.06698}.

\bibitem{DHoker:2015zfa}
  E.~D'Hoker, M.~B.~Green and P.~Vanhove,
  ``Proof of a modular relation between 1-, 2- and 3-loop Feynman diagrams on a torus,''
  \eprintN{1509.00363}.

\bibitem{Basu:2015ayg}
  A.~Basu,
  ``Poisson equation for the Mercedes diagram in string theory at genus one,''
  \doi{Class.\ Quant.\ Grav.\  {\bf 33} (2016) no.5,  055005}{doi:10.1088/0264-9381/33/5/055005}
  \eprintN{1511.07455}.
 
\bibitem{Basu:2016fpd}
  A.~Basu,
  ``Non-BPS interactions from the type II one loop four graviton amplitude,''
  \doi{Class.\ Quant.\ Grav.\  {\bf 33} (2016) no.12,  125028}{doi:10.1088/0264-9381/33/12/125028}
  \eprintN{1601.04260}

\bibitem{DHoker:2016mwo}
  E.~D'Hoker and M.~B.~Green,
  ``Identities between Modular Graph Forms,''
  \eprintN{1603.00839}.

\bibitem{Basu:2016xrt}
  A.~Basu,
  ``Poisson equation for the three loop ladder diagram in string theory at genus one,''
  \doi{Int.\ J.\ Mod.\ Phys.\ A {\bf 31} (2016) no.32,  1650169}{doi:10.1142/S0217751X16501694}
  \eprintN{1606.02203}.
  
\bibitem{Basu:2016kli}
  A.~Basu,
  ``Proving relations between modular graph functions,''
  \doi{Class.\ Quant.\ Grav.\  {\bf 33} (2016) no.23,  235011}{doi:10.1088/0264-9381/33/23/235011}
  \eprintN{1606.07084}.

\bibitem{DHoker:2016quv}
  E.~D'Hoker and J.~Kaidi,
  ``Hierarchy of Modular Graph Identities,''
  \doi{JHEP {\bf 1611} (2016) 051}{doi:10.1007/JHEP11(2016)051}
  \eprintN{1608.04393}.

\bibitem{Richards:2008jg}
  D.~M.~Richards,
  ``The One-Loop Five-Graviton Amplitude and the Effective Action,''
  \doi{JHEP {\bf 0810} (2008) 042}{doi:10.1088/1126-6708/2008/10/042}
  \eprintN{0807.2421}.

\bibitem{Green:2013bza}
  M.~B.~Green, C.~R.~Mafra and O.~Schlotterer,
  ``Multiparticle one-loop amplitudes and S-duality in closed superstring theory,''
  \doi{JHEP {\bf 1310} (2013) 188}{doi:10.1007/JHEP10(2013)188}
  \eprintN{1307.3534}.

\bibitem{Basu:2016mmk}
  A.~Basu,
  ``Simplifying the one loop five graviton amplitude in type IIB string theory,''
  \doi{Int.\ J.\ Mod.\ Phys.\ A {\bf 32} (2017) no.14,  1750074}{doi:10.1142/S0217751X17500749}
  \eprintN{1608.02056}.

\bibitem{Berkovits:2005df}
  N.~Berkovits,
  ``Super-Poincare covariant two-loop superstring amplitudes,''
  \doi{JHEP {\bf 0601} (2006) 005}{doi:10.1088/1126-6708/2006/01/005}
  \eprint{hep-th/0503197}.

\bibitem{Berkovits:2005ng}
  N.~Berkovits and C.~R.~Mafra,
  ``Equivalence of two-loop superstring amplitudes in the pure spinor and RNS formalisms,''
  \doi{Phys.\ Rev.\ Lett.\  {\bf 96} (2006) 011602}{doi:10.1103/PhysRevLett.96.011602}
  \eprint{hep-th/0509234}.

\bibitem{DHoker:2005jc} 
 E.~D'Hoker and D.~H.~Phong,
 ``Two-loop superstrings VI: Non-renormalization theorems and the 4-point function,''
 \doi{Nucl.\ Phys.\ B {\bf 715}, 3 (2005)}{doi:10.1016/j.nuclphysb.2005.02.043}
 \eprint{hep-th/0501197}.

\bibitem{DHoker:2005ht} 
 E.~D'Hoker, M.~Gutperle and D.~H.~Phong,
 ``Two-loop superstrings and S-duality,''
 \doi{Nucl.\ Phys.\ B {\bf 722}, 81 (2005)}{doi:10.1016/j.nuclphysb.2005.06.010}
 \eprint{hep-th/0503180}.

\bibitem{Gomez:2010ad}
  H.~Gomez and C.~R.~Mafra,
  ``The Overall Coefficient of the Two-loop Superstring Amplitude Using Pure Spinors,''
  \doi{JHEP {\bf 1005} (2010) 017}{doi:10.1007/JHEP05(2010)017}
  \eprintN{1003.0678}.

\bibitem{Gomez:2015uha}
  H.~Gomez, C.~R.~Mafra and O.~Schlotterer,
  ``Two-loop superstring five-point amplitude and $S$-duality,''
  \doi{Phys.\ Rev.\ D {\bf 93} (2016) no.4,  045030}{doi:10.1103/PhysRevD.93.045030}
  \eprintN{1504.02759}.

\bibitem{Pioline:2015qha}
  B.~Pioline,
  ``A Theta lift representation for the Kawazumi-Zhang and Faltings invariants of genus-two Riemann surfaces,''
  \doi{J.\ Number Theor.\  {\bf 163} (2016) 520}{doi:10.1016/j.jnt.2015.12.021}
  \eprintN{1504.04182}.

\bibitem{Pioline:2015nfa}
  B.~Pioline and R.~Russo,
  ``Infrared divergences and harmonic anomalies in the two-loop superstring effective action,''
  \doi{JHEP {\bf 1512} (2015) 102}{doi:10.1007/JHEP12(2015)102}
  \eprintN{1510.02409}.
  
\bibitem{Narain:1985jj}
  K.~S.~Narain,
  ``New Heterotic String Theories in Uncompactified Dimensions $<10$,''
  \doi{Phys.\ Lett.\  {\bf 169B} (1986) 41}{doi:10.1016/0370-2693(86)90682-9}.

\bibitem{Maharana:1992my}
  J.~Maharana and J.~H.~Schwarz,
  ``Noncompact symmetries in string theory,''
  \doi{Nucl.\ Phys.\ B {\bf 390} (1993) 3}{doi:10.1016/0550-3213(93)90387-5}
  \eprint{hep-th/9207016}.

\bibitem{Green:1981yb}
  M.~B.~Green and J.~H.~Schwarz,
  ``Supersymmetrical String Theories,''
  \doi{Phys.\ Lett.\  {\bf 109B} (1982) 444}{doi:10.1016/0370-2693(82)91110-8}.

\bibitem{Mafra:2016nwr}
  C.~R.~Mafra and O.~Schlotterer,
  ``One-loop superstring six-point amplitudes and anomalies in pure spinor superspace,''
  \doi{JHEP {\bf 1604} (2016) 148}{doi:10.1007/JHEP04(2016)148}
  \eprintN{1603.04790}.

\bibitem{OEIS} Online Encyclopedia of Integer Sequences, available at \href{https://oeis.org/}{https://oeis.org/}.
 
 \bibitem{Gaberdiel:2014cha}
  M.~R.~Gaberdiel and R.~Gopakumar,
  ``Higher Spins \& Strings,''
  \doi{JHEP {\bf 1411} (2014) 044}{doi:10.1007/JHEP11(2014)044}
  \eprintN{1406.6103 [hep-th]}.

\end{thebibliography}
\end{document}